\DeclareMathAlphabet{\mathpzc}{OT1}{pzc}{m}{it}
\newcommand{\subfiguretitle}[1]{{\scriptsize{#1}} \\[1mm] }
\newcommand{\R}{\mathbb{R}}
\newcommand{\bQ}{\bm{Q}}
\newcommand{\bL}{\bm{L}}
\newcommand{\pd}[2]{\frac{\partial#1}{\partial#2}}
\providecommand{\norm}[1]{\left\lVert #1 \right\rVert}
\providecommand{\grad}{\nabla}
\DeclareMathOperator{\tr}{tr}
\DeclareMathOperator{\diag}{diag}
\DeclareMathOperator{\mvec}{vec}
\newtheorem{theorem}{Theorem}[section]
\newtheorem{lemma}[theorem]{Lemma}
\newtheorem{definition}[theorem]{Definition}
\newtheorem{remark}[theorem]{Remark}
\newtheorem{example}[theorem]{Example}
\renewcommand*\env@matrix[1][*\c@MaxMatrixCols c]{%
	\hskip -\arraycolsep
	\let\@ifnextchar\new@ifnextchar
	\array{#1}}
\begin{document}


\title{Continuous Relaxations for the Traveling Salesman Problem}

\author[1]{Tuhin Sahai}
\author[2]{Adrian Ziessler}
\author[3]{Stefan Klus}
\author[2]{Michael Dellnitz}
\affil[1]{\normalsize United Technologies Research Center, 2855 Telegraph Ave, Suite 410,\newline Berkeley, CA, 94705, USA.}
\affil[2]{\normalsize Department of Mathematics, Paderborn University,\newline Warburger Stra\ss e 100, 33098 Paderborn, Germany.}
\affil[3]{\normalsize Department of Mathematics and Computer Science, Freie Universit\"{a}t Berlin,\newline Arnimallee 9, 14195 Berlin, Germany.}

\maketitle

\begin{abstract}
In this work, we aim to explore connections between dynamical systems techniques and combinatorial optimization problems. In particular, we construct heuristic approaches for the traveling salesman problem (TSP) based on embedding the relaxed discrete optimization problem into appropriate manifolds. We explore multiple embedding techniques -- namely, the construction of new dynamical systems on the manifold of orthogonal matrices and associated Procrustes approximations of the TSP cost function. Using these dynamical systems, we analyze the local neighborhood around the optimal TSP solutions (which are equilibria) using computations to approximate the associated \emph{stable manifolds}. We find that these flows frequently converge to undesirable equilibria. However, the solutions of the dynamical systems and the associated Procrustes approximation provide an interesting biasing approach for the popular Lin--Kernighan heuristic which yields fast convergence.  The Lin--Kernighan heuristic is typically based on the computation of edges that have a ``high probability'' of being in the shortest tour, thereby effectively pruning the search space. Our new approach, instead, relies on a natural relaxation of the combinatorial optimization problem to the manifold of orthogonal matrices and the subsequent use of this solution to bias the Lin--Kernighan heuristic. Although the initial cost of computing these edges using the Procrustes solution is higher than existing methods, we find that the Procrustes solution, when coupled with a homotopy computation, contains valuable information regarding the optimal edges. We explore the Procrustes based approach on several TSP instances and find that our approach often requires fewer $k$-opt moves than existing approaches. Broadly, we hope that this work initiates more work in the intersection of dynamical systems theory and combinatorial optimization.

\end{abstract}


\section{Introduction}
\label{sec:Introduction}
\noindent The use of dynamical systems based methods for analyzing optimization algorithms is a burgeoning area of interest. For example, it has been found that if one embeds sufficiently hard instances of the satisfiability problem~\cite{cit:sat_book} into a corresponding dynamical system, one observes transient chaos~\cite{cit:opt_chaos}. In the continuous optimization setting, accelerated momentum methods were analyzed using dynamical systems and calculus of variation based approaches, providing intuitive insight into convergence properties~\cite{cit:diff,cit:wibisono}. However, concrete examples of the direct application of dynamical systems and continuous processes to construct new state-of-the-art algorithms are limited.
In this work, we aim to use dynamical systems and their associated manifolds of the traveling salesman problem (TSP) to extract computational and algorithmic insights.

The TSP is an iconic NP-hard problem that has
received decades of interest~\cite{Cit:cook}. This combinatorial optimization problem arises in a wide
variety of applications related to genome map
construction~\cite{Cit:geneTSP}, telescope management~\cite{Cit:telescope,
Cit:telescope2}, and drilling circuit boards~\cite{Cit:circuit-boards}.
The TSP also naturally arises in applications related to target
tracking~\cite{Cit:target_tracking}, vehicle
routing~\cite{Cit:vehicle_routing}, and communication networks~\cite{Cit:sonet_rings} to name a few.
Recently, a history dependent TSP was
used to construct efficient techniques for learning the structure of Bayesian
networks~\cite{Cit:Tuhin_BN}. For further information about applications related to the TSP, we refer the reader to~\cite{Cit:cook}.

In its basic form, the statement of the TSP is
exceedingly simple. The task is to find the shortest Hamiltonian circuit
through a list of cities, given their pairwise distances. Despite its
simplistic appearance, the underlying problem is NP-hard~\cite{TSP-NP-hard}.
Several heuristics have been developed over the years to solve the problem
including ant colony optimization~\cite{Cit:Ant}, cutting plane
methods~\cite{Cit:cuttingplane2,Cit:cuttingplane}, Christofides heuristic
algorithm~\cite{Cit:christofides}, and the Lin--Kernighan
heuristic~\cite{Cit:LinKernighan}.

In this work, we concentrate on exploring novel orthogonal relaxation and embedding based approximations to the TSP that are inspired from dynamical systems theory. In the first part, we construct a dynamical systems approach for computing locally optimal solutions of the TSP. This flow on the manifold of orthogonal matrices converges to a permutation matrix that minimizes the tour length. Although the method is interesting and elegant, the flow often converges to local minima. For TSP instances with more than $50$ cities, these minima are not competitive when compared to state-of-the-art heuristics~\cite{Cit:cuttingplane2,Cit:cook,Cit:LinKernighan,Cit:cuttingplane}.

However, inspired by this continuous relaxation, we compute the solution to a two-sided orthogonal Procrustes problem~\cite{Cit:Procrustes-book} that relaxes the TSP to the manifold of orthogonal matrices. We find that this Procrustes approach can be combined with the Lin--Kernighan heuristic~\cite{Cit:LinKernighan}
for computing solutions of the TSP. The Lin--Kernighan heuristic is an extremely popular method for the TSP and has been credited with finding the best known solutions for several large instances~\cite{Hel98,Cit:keld2}. It has been particularly successful in finding the best known solutions for several asymmetric TSPs~\cite{Hel98}. We provide a detailed description of the Lin--Kernighan heuristic in Section~\ref{sec:Lin--Kernighan}.

Helsgaun's software package LKH~\cite{Hel98} is a highly successful software implementation of the approach.
This implementation uses minimum spanning trees~\cite{HK70, HK71} to pre-compute candidate sets that contain edges that are likely to be a part of the optimal solution. This biasing methodology is found to reduce the number of $ k $-opt moves compared to baseline minimum tree based methods~\cite{Hel98}. In our work,
the Procrustes solution is used to bias the Lin--Kernighan heuristic algorithm to pick edges are that more likely to be in the optimal tour. We remark that our approach is tightly connected to spectral methods for graphs~\cite{Cit:klus2014spectral}.
Although, the Procrustes based methodology has a higher overall computational cost $O(n^3)$ due to the required eigenvector computations  -- compared to $O(n^{2.2})$ in the case of the traditional Lin--Kernighan heuristic~\cite{Cit:LinKernighan}. However, the Procrustes based Lin--Kernighan computation frequently converges faster (in fewer iterations) than the $1$-tree based approach.

Our goals are twofold: First, we would like to demonstrate that the spectral structure of the associated matrices of these classes of problems contain valuable information that can be exploited for analysis and construction of novel heuristics. Second, we envision that by approximating the spectral structure~\cite{cit:schafer2017owhadi}, one could potentially construct competitive methods for the TSP and the quadratic assignment problem (QAP). Moreover, we note that although eigenvalue and orthogonal approximations have been constructed for the TSP, they have traditionally been used for deriving bounds for the solutions~\cite{Cit:qap_trace,cit:qap_bounds}. To the best of the authors knowledge, this work is the first attempt to use the orthogonal relaxations and \emph{dynamical systems theory} to construct computational methods for the TSP.

Our paper is organized as follows: We start with the  mathematical formulation of the TSP in section~\ref{sec:TSP}. In section~\ref{sec:Lin--Kernighan}-\ref{ssec:subgradientOptimization}, we describe the standard Lin--Kernighan heuristic along with techniques to limit the search space using $ \alpha $-nearness values (based on minimum spanning trees). In section~\ref{sec:Dyn-sys}, we construct dynamical systems on the manifold of orthogonal matrices that converge to Hamiltonian cycles. Using these dynamical systems, we analyze the stability and subsets of the stable manifold of the optimal TSP solutions using set-oriented numerical methods implemented in the software package GAIO \cite{DFJ01}. We perform the computations in an effort to gain insights into the local dynamics of the flow in the neighborhood of the optimal solutions. In general, we find that the basin of attraction is typically quite small and therefore, these dynamical systems converge to undesirable local minima. Inspired by these insights, we use a Procrustes-based approach for biasing the Lin--Kernighan heuristic based on ``$P$-nearness values'' in section~\ref{sec:Procrustes}. Numerical results are presented in section~\ref{sec:Results}. Finally, we conclude with future work in section~\ref{sec:Conclusion}.

\section{The traveling salesman problem}
\label{sec:TSP}

Given a list of $ n $ cities $ \{ C_{1}, C_{2}, \dots, C_{n} \} $ and the associated distances between cities $ C_i $ and $ C_j $, denoted by $ d_{ij} $, the TSP aims to find an ordering $ \sigma $ of $ \{ 1, 2, \dots, n \} $ such that the tour cost, given by
\begin{equation} \label{eq:basic_cost}
    c = \sum_{i=1}^{n-1} d_{\sigma(i), \sigma(i+1)} + d_{\sigma(n), \sigma(1)},
\end{equation}
is minimized. For the Euclidean TSP, for instance, $ d_{ij} = \norm{x_i - x_j}_2 $, where $ x_i \in \R^d $ is the position of $ C_i $. In general, however, the distance matrix $ D = (d_{ij}) $ does not have to be symmetric. The ordering $ \sigma $ can be represented as a unique permutation matrix $ P $. Note, however, that due to the underlying cyclic symmetry, multiple orderings -- corresponding to different permutation matrices -- have the same cost.

There are several equivalent ways to define the cost function of the TSP. We restrict ourselves to the trace\footnote{The trace of a matrix $ A \in \R^{n \times n} $ is defined to be the sum of all diagonal entries, i.e., $ \tr(A) = \sum_{i=1}^n a_{ii} $.} formulation proposed in~\cite{Won95}. Let $ \mathcal{P}_n $ denote the set of all $ n \times n $ permutation matrices, then the TSP can be written as a combinatorial optimization problem of the form
\begin{equation} \label{eq:TSPCost}
    \min_{P \in \mathcal{P}_n} \tr \left( A^T P^T B P \right),
\end{equation}
where $ A = D $ and $ B = T $. Here, $ T $ is defined to be the adjacency matrix of the cycle graph of length $ n $. 
In what follows, we use the undirected cycle graph adjacency matrix for symmetric TSPs and the one corresponding to the directed cycle graphs for asymmetric TSPs. The matrices are defined as,
\begin{equation*} \label{eq:tmatrix}
    T_\text{dir} = \begin{pmatrix}
        0 & 1 &   &  &   \\
          & 0 & 1 &  &   \\
          &   & \ddots & \ddots &  \\
          &   &   & 0  & 1 \\
        1 &   &   &  & 0
    \end{pmatrix} \; \text{ or } \;
    T_\text{undir} = \begin{pmatrix}
        0 & 1 &   &  & 1 \\
        1 & 0 & 1 &  &   \\
          & \ddots  & \ddots & \ddots &  \\
          &   & 1 & 0  & 1 \\
        1 &   &   & 1 & 0
    \end{pmatrix}.
\end{equation*}
The equivalence of \eqref{eq:basic_cost} and \eqref{eq:TSPCost} can be derived easily using the observation that $ \tilde{T} \coloneqq P^T T P$ is a permuted tour matrix, i.e., the $(i,j)$ entry is $1$ if the tour goes from city $C_i$ to city $C_j$. Thus, for any permutation, $\tr(D^T \tilde{T}) = \sum_{i,j=1}^n d_{ij} \tilde{t}_{ij} $ is simply the sum of the distances associated with each edge. In what follows, we restrict our work to symmetric matrices; thus, we simply consider tour matrix $ T = T_\text{undir} $ for undirected graphs.

The TSP can also be regarded as a special case of the general QAP \cite{BCPP98,KB57}, given by
\begin{equation} \label{eq:qapcost}
    \min_{P \in \mathcal{P}_n} \tr\left(A^{T}P^{T} BP + P^{T}C\right),
\end{equation}
or, alternatively as, a special case of the graph matching problem. The relationship between various combinatorial optimization problems is explored in Figure~\ref{fig:problems}. In order to convert the minimization problem into an equivalent maximization problem, note that
\begin{equation} \label{eq:MinMaxEquivalence}
    \begin{split}
        \norm{A - P^T B P}_F^2 &= \tr\left( A^T A \right)
                                - 2 \tr\left( A^T P^T B P \right)
                                + \tr\left(B^T B\right) \\
                               &= \norm{A}_F^2 - 2 \tr\left(A^T P^T B P \right) + \norm{B}_F^2.
    \end{split}
\end{equation}
Thus, the norm is minimized if the trace is maximized and vice versa. We note that similar trace formulations for the QAP were derived in~\cite{Cit:qap_trace}.

\begin{figure}
    \includegraphics[width=\textwidth]{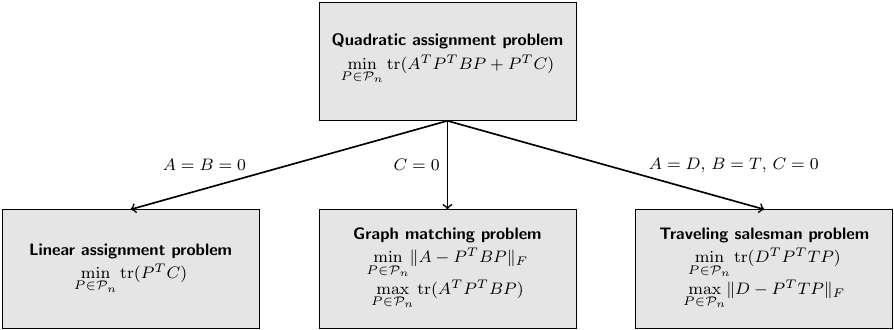}
    \caption{Relationships between the various combinatorial optimization problems.}
    \label{fig:problems}
\end{figure}

Over the last few decades, a plethora of heuristics has been developed to solve the TSP efficiently. In order to find a good approximation of the optimal tour, typically different global and local heuristics are combined. A very efficient and powerful methodology is to construct an initial solution with the aid of greedy algorithms, for instance the nearest neighbor heuristic, and to improve the solution successively using local heuristics such as $ k $-opt move based methods (as described in section~\ref{sec:Lin--Kernighan}). As noted previously, one of the best available TSP solvers is Helsgaun's LKH software~\cite{Hel98}. We now provide more details on LKH and its implementation.

\subsection{The Lin--Kernighan heuristic}
\label{sec:Lin--Kernighan}

The Lin--Kernighan heuristic is a popular heuristic for the TSP introduced in~\cite{Cit:LinKernighan}. Starting from an initial tour, the approach progresses by extracting edges from the tour and replacing them with new edges, while maintaining the Hamiltonian cycle constraint. If $k$ edges in the tour are simultaneously replaced, this is known as the $k$-opt move~\cite{Cit:helsgaun_k_opt}. To prune the search space, the algorithm relies on minimum spanning trees~\cite{HK70,HK71} to identify edges that are more likely to be in the tour. This ``importance'' metric for edges is called $\alpha$-nearness and described subsequently. The algorithm has found great success on large instances of the TSP~\cite{Hel98,Cit:keld2}.
Note that this algorithm has been extended to generalized TSPs~\cite{Cit:generalized_tsp} and clustered TSPs~\cite{Cit:tsp_clustered}.

The LKH package \cite{Hel98,Cit:keld2} offers different heuristics to compute an initial tour. The standard method is to choose one node at random and to iteratively add edges based on computed $\alpha$-nearness values and related candidate sets until a tour is found.
When an initial tour has been found, LKH improves it using local heuristics. A very popular and efficient local heuristic is the $ k $-opt move. The simplest version, $ 2 $-opt, removes two edges of the tour and reconnects the subtours as shown in Figure~\ref{fig:2-opt and 3-opt}. If the resulting tour is shorter than the original tour, the step is accepted and rejected otherwise. Similarly, $ 3 $-opt removes three edges of the current tour, reconnects the subtours and picks the shortest tour.

\begin{figure}[htb]
	\centering
	\begin{minipage}[b]{0.49\textwidth}
		\centering
		\subfiguretitle{a)}
		\includegraphics[scale=0.29]{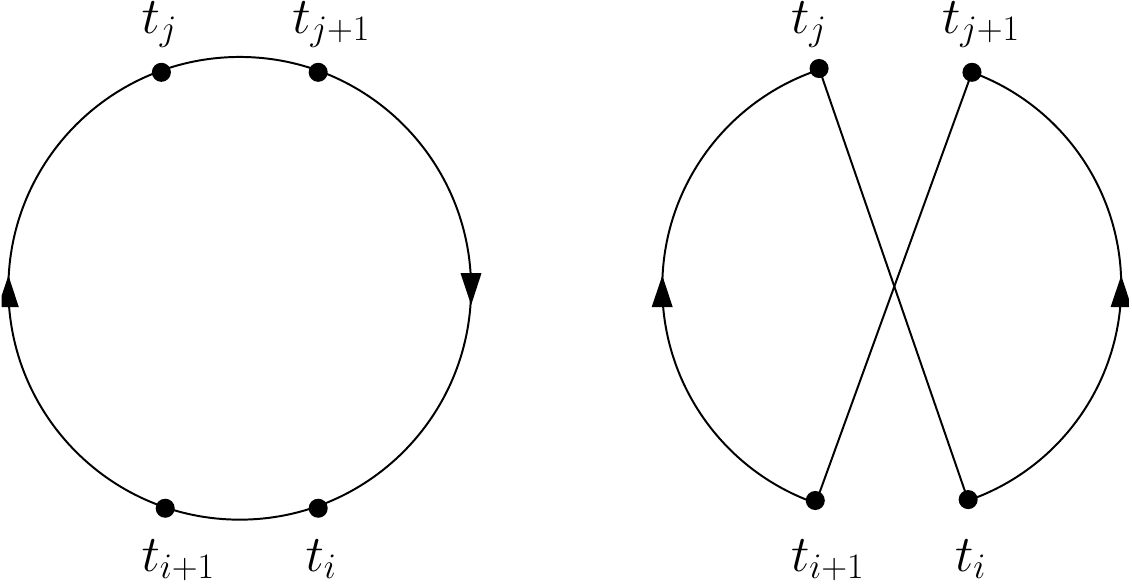}
	\end{minipage}
	\begin{minipage}[b]{0.49\textwidth}
		\centering
		\subfiguretitle{b)}
		\includegraphics[scale=0.29]{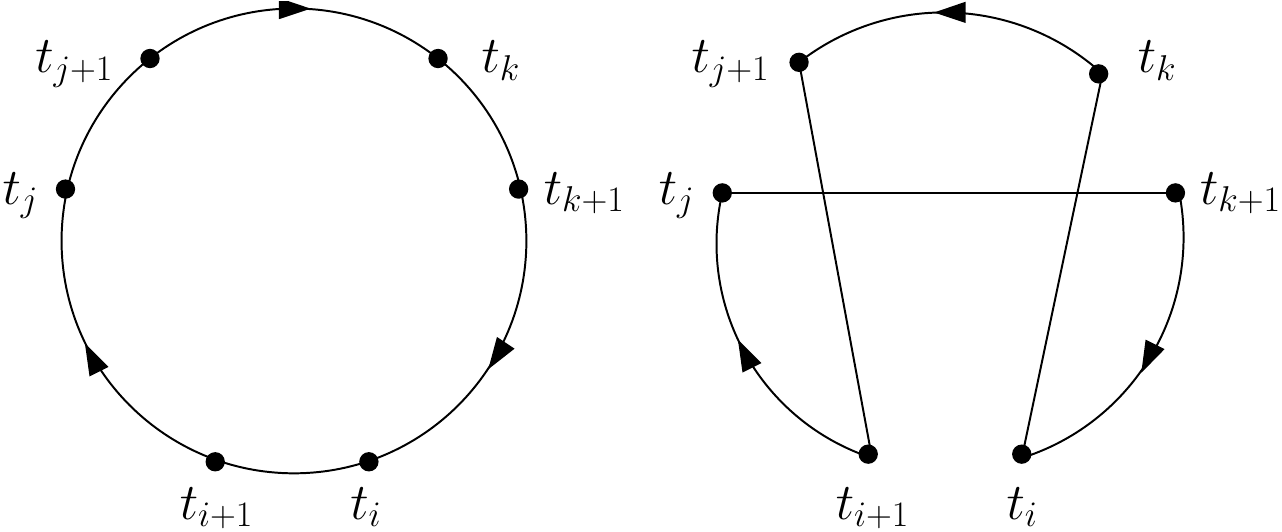}
	\end{minipage}
	\caption{2-opt and 3-opt move.}
	\label{fig:2-opt and 3-opt}
\end{figure}

\subsection{Candidate sets and $\alpha$--nearness}\label{sec:candidate-alpha}
LKH uses $ k $-opt with varying $ k $. The basic move is a sequential $ 5 $-opt step. In order to limit the search space and to increase the efficiency of $ k $-opt moves, candidate sets that contain promising edges are computed for all cities. Methods to construct candidate sets for large TSPs have to be efficient in terms of both CPU time and memory usage. As mentioned previously, the standard approach implemented in LKH, called $ \alpha $-nearness, is based on minimum spanning trees or, to be more precise, on 1-trees (a slight variant of minimum spanning trees).

\begin{definition}
	Let $ G = (V, E) $ be a graph with vertices $ V $ and edges $ E $. A 1-tree for $ G $ is defined to be a spanning tree for the vertices $ V \setminus \{v_1\} $ plus two additional edges $ e \in E $ incident to vertex $ v_1 $.
\end{definition}

A 1-tree with minimum weight is called a minimum 1-tree. Note that every tour is a 1-tree with the additional property that the degree of each vertex is two. It has been found that the minimum 1-tree typically contains several edges that lie in the optimal tour. The definition of $ \alpha $-nearness is based on a sensitivity analysis using 1-trees. The $ \alpha $-nearness value for the cities $ C_i $ and $ C_j $ is, roughly speaking, the difference between a minimum 1-tree and a 1-tree that is required to contain edge $ (v_i, v_j) $. That is, if an edge belongs to the minimum 1-tree, then the $ \alpha $-nearness value is $ 0 $ and the edge is assigned a high probability of being part of the shortest tour.

For each city, the candidate set is then defined to be the set of the $ m $ incident edges with the lowest $ \alpha $-nearness values. The candidate sets are used to limit and direct the search. Candidate sets based solely on the distance between cities are typically not connected (for instance, see Figure~\ref{fig:distance_vs_procrustes}) and convergence to a good solution of the TSP is expected to be slow. It was shown by Stewart~\cite{Ste87} that minimum spanning trees, which are by definition always connected, can be used to increase the efficiency of local heuristics.

\subsection{Subgradient optimization}\label{ssec:subgradientOptimization}
The $\alpha$-nearness values, typically do not give rise to optimal tours when coupled with k-opt moves. In order to improve the $ \alpha $-nearness values, a subgradient optimization method is often used. This method modifies the original distance matrix $ D $ in a way such that the degree of almost all vertices of the optimized 1-tree converge to a value of $ 2 $. The entries of the new distance matrix, $ \widetilde{D}(\pi) $, are computed as
\begin{equation*}
\widetilde{d}_{ij}(\pi) = d_{ij} + \pi_i + \pi_j.
\end{equation*}
The $ \pi $ values, sometimes called penalties, change the distances between the cities. The basic idea is to make edges incident to vertices with a low degree shorter and edges incident to vertices with a high degree longer so that the resulting 1-tree is close to a tour. This transformation of the distance matrix does not change the shortest tour and leads to significantly improved $ \alpha $-nearness values~\cite{Hel98}. This method can also be used to compute a lower bound which is in general very close to the optimal tour length~\cite{HK70, HK71}. Figure~\ref{fig:SubgradientOpt} shows the impact of the subgradient optimization. In the example, most of the edges of the optimal tour are already present in the optimized 1-tree. For a more detailed description of $ \alpha $-nearness, 1-trees, and the subgradient optimization scheme, we refer to~\cite{HK70, HK71, Hel98}.

\begin{figure}[htb]
	\centering
	\begin{minipage}[t]{0.32\textwidth}
		\centering
		\subfiguretitle{a)}
		\includegraphics[width=\textwidth]{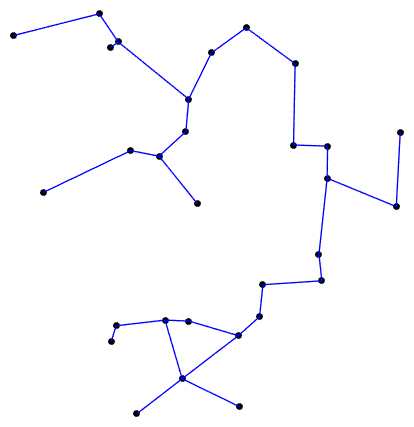}
	\end{minipage}
	\begin{minipage}[t]{0.32\textwidth}
		\centering
		\subfiguretitle{b)}
		\includegraphics[width=\textwidth]{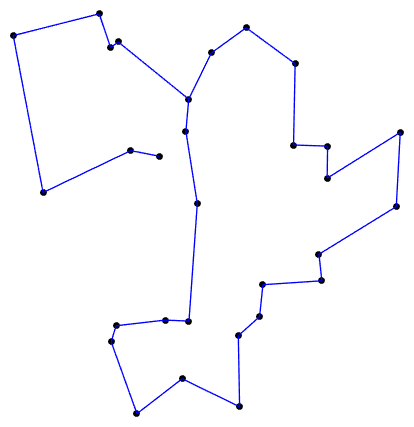}
	\end{minipage}
	\begin{minipage}[t]{0.32\textwidth}
		\centering
		\subfiguretitle{c)}
		\includegraphics[width=\textwidth]{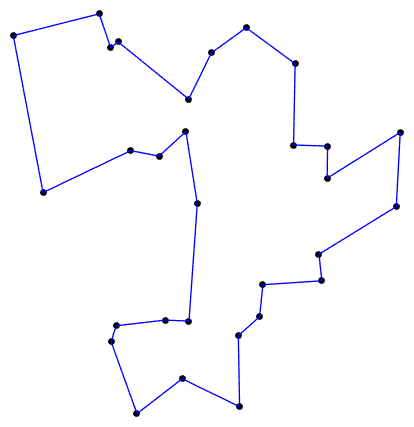}
	\end{minipage}
	\caption{Impact of the subgradient optimization. a) Minimal 1-tree of the original distance matrix.
		b) Minimal 1-tree of the transformed distance matrix. c) Shortest tour.}
	\label{fig:SubgradientOpt}
\end{figure}

\section{Dynamical systems approach}
\label{sec:Dyn-sys}

In this section, we construct a dynamical systems approach for computing optimal tours for the TSP. In particular, we use matrix differential equations defined on the manifold of orthogonal matrices. As mentioned in previous sections, solutions of the TSP can be represented as permutation matrices. It is well known that permutation matrices lie on the manifold of orthogonal matrices. Our goal is to construct flows that minimize the TSP cost as they evolve. Note that gradient flow methods were first used by Brockett to compute eigenvalues and to solve linear programming or least squares matching problems~\cite{Bro89, Bro91}. This approach was subsequently also used for combinatorial optimization problems~\cite{WY10,Won95, ZP08}. We will now formulate multiple cost functions for constructing gradient flows for the TSP. In what follows, let us denote by ${\mathcal{O}_n = \left\{ P \in \R^{n \times n} \mid P^T P = I \right\} \supset \mathcal{P}_n }$ the set of all $ n \times n $ orthogonal matrices. In this section, we consider the  following orthogonal relaxation of the combinatorial optimization problem~\eqref{eq:TSPCost},
\begin{equation} \label{eq:RelTSPCost}
\min_{P \in \mathcal{O}_n} \tr \left( A^T P^T B P \right).
\end{equation} Given that we will use the solution of the \emph{Procrustes problem} in both the dynamical systems approach and Lin-Kernighan heuristic, we start by defining the problem and its solution.

\subsection{The two-sided orthogonal Procrustes problem}
\label{sec:twosidedproc}
We start by considering the standard formulation of the Procrustes problem. We will then modify it to the TSP setting. Let $ A $ and $ B $ be two symmetric $ n \times n $ matrices. Then
\begin{equation}
    \min_{P \in \mathcal{O}_n} \norm{ A - P^T B P }_F
    \label{eq:TSOPP}
\end{equation}
is called the \emph{two-sided orthogonal Procrustes problem}. As shown in \eqref{eq:MinMaxEquivalence}, cost function \eqref{eq:RelTSPCost} is minimized if the cost function $\norm{ A - P^T B P }_F$ of the Procrustes problem is maximized and vice versa. Since $ \mathcal{P}_n \subset \mathcal{O}_n $, the cost of the orthogonal matrix is always lower than (or equal to if the matrices $ A $ and $ B $ are permutation-similar) the cost of the permutation matrix.

\begin{theorem}
Given two symmetric matrices $ A $ and $ B $, whose eigenvalues are distinct, let $ {A = V_A \Lambda_A V_A^T} $ and $ B = V_B \Lambda_B V_B^T $ be eigendecompositions, with $ \Lambda_A = \diag\left(\lambda_{A}^{(1)}, \dots, \lambda_{A}^{(n)}\right) $, $ \Lambda_B = \diag\left(\lambda_{B}^{(1)}, \dots, \lambda_{B}^{(n)}\right) $, and $ \lambda_{A}^{(1)} \geq \dots \geq \lambda_{A}^{(n)} $ as well as $ \lambda_{B}^{(1)} \geq \dots \geq \lambda_{B}^{(n)} $. Then every orthogonal matrix $ P^* $ which minimizes~\eqref{eq:TSOPP} has the form
\begin{equation*}
    P^* = V_B S V_A^T,
\end{equation*}
where $ S = \diag(\pm 1, \dots, \pm 1) $.
\end{theorem}

A proof of this theorem can be found in~\cite{Sch68}, for example. If the eigenvalues of $ A $ and $ B $ are distinct, then there exist $ 2^n $ different solutions with the same cost. If one or both of the matrices possess repeated eigenvalues, then the eigenvectors in the matrices $V_A$ and $V_B$ are determined only up to basis rotations, which further increases the solution space.

We note that, as shown in~\eqref{eq:MinMaxEquivalence}, the minimization of the TSP cost corresponds to the maximization of the cost in~\eqref{eq:TSOPP} and not the standard minimization found in literature. The theorem states that in order to minimize the cost function in~\eqref{eq:TSOPP}, the eigenvalues and corresponding eigenvectors have to be sorted both in either increasing or decreasing order. On the other hand, from the proof of the theorem it can be seen that in order to compute the solution of~\eqref{eq:RelTSPCost}, the eigenvalues and eigenvectors of $ A $ and $ B $, respectively, have to be sorted in opposite order (which corresponds to the maximization of the cost in~\eqref{eq:TSOPP}). A similar condition was noted in~\cite{AW00}.

Let us now consider gradient flows for orthogonal and tour matrices. We note that the Procrustes problem is relevant for the resulting dynamical systems as demonstrated below.

\subsection{Gradient flows for orthogonal matrices}\label{ssec:orthogonal_flow}

The orthogonal relaxation of the combinatorial optimization problem \eqref{eq:TSPCost}, given by \eqref{eq:RelTSPCost}, can be solved using a steepest descent method on the manifold of orthogonal matrices. Given a cost function $ F $, the gradient flow is defined as
\begin{equation} \label{eq:GradFlow}
    \dot{P} = -\grad{F(P)},
\end{equation}
which is a matrix differential equation evolving on the manifold of orthogonal matrices. That is, starting with an orthogonal matrix $ P $, the trajectory remains for all time in~$\mathcal{O}_n $. Let $ [A, B] = A B - B A $ be the standard Lie bracket and $ \{A, B\} = A^T B - B^T A $ the generalized Lie bracket. The gradient of a function $ F $ defined on the manifold of orthogonal matrices is
\begin{equation} \label{eq:grad F}  
    \grad{F}(P) = F_P - P F_P^T P = P \{ P, F_P \},
\end{equation}
where $ F_P $ is the matrix of partial derivatives~\cite{EAS98}, i.e., $ (F_P)_{ij} = \pd{F}{P_{ij}} $.

\begin{lemma} \label{lem:relaxed QAP}
For the cost function $ F(P) = \tr\left( A^T P^T B P \right) $, we obtain
\begin{equation*}
    \grad{F}(P) = P \left( \left\{ P^T B P, A \right\} + \left \{P^T B^T P, A^T \right \}\right).
\end{equation*}
\end{lemma}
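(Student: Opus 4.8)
The plan is to compute the matrix of partial derivatives $F_P$ for $F(P) = \tr\left(A^T P^T B P\right)$ and then substitute into the general gradient formula \eqref{eq:grad F}. First I would expand the cost function in index notation, writing $F(P) = \sum_{i,j,k,\ell} A_{ij} P_{ki} B_{k\ell} P_{\ell j}$, or equivalently work directly with the differential $\mathrm{d}F = \tr\left(A^T (\mathrm{d}P)^T B P\right) + \tr\left(A^T P^T B (\mathrm{d}P)\right)$. Using the cyclic property of the trace and the identity $\tr(X^T Y) = \tr(Y^T X)$, the first term can be rewritten as $\tr\left((\mathrm{d}P)^T B P A^T\right)$ and the second as $\tr\left((\mathrm{d}P)^T B^T P A\right)$, so that reading off the coefficient of $\mathrm{d}P$ gives
\begin{equation*}
    F_P = B P A^T + B^T P A.
\end{equation*}

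Next I would substitute this expression into $\grad{F}(P) = P\{P, F_P\}$, where $\{X, Y\} = X^T Y - Y^T X$. Expanding,
\begin{equation*}
    \{P, F_P\} = P^T F_P - F_P^T P = P^T\left(B P A^T + B^T P A\right) - \left(A P^T B^T + A^T P^T B\right) P.
\end{equation*}
Grouping the terms that carry $A$ versus $A^T$, the $A^T$-terms are $P^T B P A^T - A P^T B^T P$, which is exactly $\left(P^T B P\right) A^T - A^T{}^{\,T}\left(P^T B P\right)^{\!T}$ rearranged — more precisely it equals $\left\{P^T B^T P,\, A^T\right\}$ once one checks the transpose bookkeeping, and symmetrically the $A$-terms give $\left\{P^T B P,\, A\right\}$. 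Collecting yields $\grad{F}(P) = P\left(\left\{P^T B P, A\right\} + \left\{P^T B^T P, A^T\right\}\right)$, as claimed.

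The only real obstacle is the transpose/label bookkeeping in the last step: one must be careful that $\{X,Y\} = X^T Y - Y^T X$ is applied with the arguments in the correct order, since $\{X,Y\} = -\{Y,X\}$ and it is easy to flip a sign or swap which factor gets transposed. I would verify this by matching each of the four scalar-index terms of $\{P, F_P\}$ against the four terms produced by expanding $\left\{P^T B P, A\right\} + \left\{P^T B^T P, A^T\right\}$, using only $(XY)^T = Y^T X^T$ and $(X^T)^T = X$; no symmetry assumption on $A$ or $B$ is needed for the identity itself. As a sanity check, when $B = B^T$ and $A = A^T$ the formula collapses to $\grad{F}(P) = 2\,P\left\{P^T B P, A\right\}$, which is the expected symmetric-case gradient and is consistent with the relaxed-QAP flows of \cite{Won95}.
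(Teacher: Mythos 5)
Your proposal is correct and follows essentially the same route as the paper: compute $F_P = BPA^T + B^TPA$, substitute into $\grad{F}(P) = P\{P, F_P\}$, and regroup the four terms into $\{P^TBP, A\} + \{P^TB^TP, A^T\}$. The only difference is that you derive $F_P$ via the differential and verify the regrouping explicitly, whereas the paper cites a reference for the derivative and asserts the rewriting; your symmetric-case sanity check against \eqref{eq:SQAPFlow} also comes out right.
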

\begin{proof}
Since $ F_P = B P A^T + B^T P A $, see~\cite{PP08}, using \eqref{eq:grad F} it directly follows that $\grad{F}(P) = P \{ P, B P A^T + B^T P A \}$, which can be rewritten as above.
\end{proof}

This is our generalization of the matrix flow defined in~\cite{ZP08} for the symmetric graph matching problem. If $ A $ and $ B $ are symmetric, this can be simplified to
\begin{equation} \label{eq:SQAPFlow}
    \grad{F}(P) = 2 P \left[ P^T B P, A \right].
\end{equation}
Since the optimal solution of this optimization problem is in general not a permutation matrix, Zavlanos and Pappas~\cite{ZP08} use a second term for solving the graph matching problem, which penalizes nonnegative entries. Note that the set of permutation matrices is the intersection of the sets of orthogonal and nonnegative matrices. In order to force the gradient flow to converge to a permutation matrix, a cubic penalty function is used.

\begin{lemma} \label{lem:penalty function}
Let $ \circ $ denote the Hadamard or element-wise product of two matrices. For the penalty function $ G(P) = \tfrac{1}{3} \tr \left( P^T \left( P - (P \circ P) \right) \right) = \tfrac{1}{3} n - \tfrac{1}{3} \sum_{i,j=1}^n p_{ij}^3 $, the gradient is given by
\begin{equation} \label{eq:PenaltyFlow}
    \grad{G}(P) = P \left( (P \circ P)^T P - P^T(P \circ P) \right).
\end{equation}
\end{lemma}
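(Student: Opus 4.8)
The plan is to apply formula \eqref{eq:grad F} directly, so the only real work is to compute the Euclidean derivative matrix $ G_P $. First I would rewrite $ G $ in coordinates: since $ (P \circ P)_{ij} = p_{ij}^2 $, we have $ \tr\left( P^T (P \circ P) \right) = \sum_{i,j=1}^n p_{ij}^3 $, and for $ P \in \mathcal{O}_n $ one has $ \tr(P^T P) = \sum_{i,j=1}^n p_{ij}^2 = n $, which gives the stated identity $ G(P) = \tfrac{1}{3} n - \tfrac{1}{3} \sum_{i,j=1}^n p_{ij}^3 $. The one entrywise computation that needs a moment of care is the derivative of the cubic term: $ \pd{}{P_{ij}} \sum_{k,l} p_{kl}^3 = 3 p_{ij}^2 $, so $ \pd{G}{P_{ij}} = -p_{ij}^2 $, i.e., $ G_P = -(P \circ P) $. (Alternatively, differentiating the unrestricted expression $ \tfrac{1}{3}\tr(P^T P) - \tfrac{1}{3}\tr(P^T(P\circ P)) $ one obtains $ G_P = \tfrac{2}{3} P - (P \circ P) $; the extra term $ \tfrac{2}{3} P $ turns out to be harmless, as noted below.)

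Next I would substitute into $ \grad{G}(P) = P \{ P, G_P \} = P \left( P^T G_P - G_P^T P \right) $ from \eqref{eq:grad F}, using that the Hadamard square is transpose-compatible, $ (P \circ P)^T = P^T \circ P^T $, so that $ \{ P, P \circ P \} = P^T (P \circ P) - (P \circ P)^T P $. With $ G_P = -(P \circ P) $ this yields
\[
    \grad{G}(P) = P \left( -P^T (P \circ P) + (P \circ P)^T P \right)
                = P \left( (P \circ P)^T P - P^T (P \circ P) \right),
\]
which is exactly \eqref{eq:PenaltyFlow}. If one instead carries the term $ \tfrac{2}{3} P $ along, it contributes $ P \left( \tfrac{2}{3} P^T P - \tfrac{2}{3} P^T P \right) = 0 $ inside the bracket, so the result is unchanged; this also reflects the fact that $ \tr(P^T P) $ is constant on $ \mathcal{O}_n $ and hence has vanishing Riemannian gradient.

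I do not expect any genuine obstacle here: the argument is purely a coordinate computation of $ G_P $ followed by insertion into the already-established gradient identity \eqref{eq:grad F}. The only places to be careful are the entrywise differentiation of the cubic term and keeping track of transposes of the Hadamard product; everything else is routine simplification.
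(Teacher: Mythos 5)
Your proposal is correct and follows the paper's proof exactly: compute $G_P = -(P \circ P)$ and substitute into \eqref{eq:grad F}. The extra remark that the $\tfrac{2}{3}P$ term from differentiating $\tr(P^TP)$ cancels inside the bracket is a nice sanity check but not needed beyond what the paper does.
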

\begin{proof}
Note that $ G_P = -(P \circ P) $. Using \eqref{eq:grad F} results in the above gradient.
\end{proof}

By combining the two functions $ F $ and $ G $, it is possible to compute a permutation matrix which is ``close'' to the optimal orthogonal solution. In~\cite{ZP08}, the steady state solution of the superimposed gradient flows for $ F $ and $ G $, given by
\begin{equation*}
    \begin{split}
        \dot{P} = -(1 & - k) P \left( \left\{ P^T B P, A \right\} + \left \{P^T B^T P, A^T \right \}\right)\\
                     & - k \, P \left( (P \circ P)^T P - P^T(P \circ P) \right),
    \end{split}
\end{equation*}
is computed for $ k = 0 $, then the parameter $ k $ is set to a value sufficiently close to $ 1 $ so that the flow converges to a permutation matrix. Another approach is to apply a homotopy-based scheme, where $ k $ is the continuation parameter which is gradually increased until the solution is close to a permutation matrix.

In what follows, we will consider the TSP as a constrained optimization problem of the form,
\begin{equation}\label{eq:minPwithEqualityConstraints}
	\begin{aligned}
	\min_{P \in \mathcal{O}_n} & \; \tr \left( A^T P^T B P \right), \\
	s.t. \; & \; G(P) = 0.
	\end{aligned}
\end{equation}
This formulation gives rise to the following set of equations,
\begin{equation}\label{eq:TSP flow}
\begin{aligned}
    \dot{P}       &= -P \left( \left\{ P^T B P, A \right\} + \left \{P^T B^T P, A^T \right\} \right)
- \lambda P \left( (P \circ P)^T P - P^T(P \circ P) \right), \\
\dot{\lambda} &= \tfrac{1}{3} \tr \left( P^T \left( P - (P \circ P) \right) \right).
\end{aligned}
\end{equation}
The above set of equations are obtained by using gradient descent on the Lagrangian cost function, as described in~\cite{PB88}, for general constrained optimization problems. Note that $\lambda$ in~\eqref{eq:TSP flow} is the Lagrange multiplier.

%
	
\begin{example} \label{ex:P flow}
	In order to illustrate the gradient flow approach, let us consider a simple TSP with 10 cities. Using~\eqref{eq:TSP flow}, we obtain the results shown in Figure~\ref{fig:P flow}. In this example, the dynamical system converges to the optimal tour.
	
	\begin{figure}[!htb]
		\centering
		\begin{minipage}[c]{0.32\textwidth}
			\centering
			\subfiguretitle{a)}
			\includegraphics[width=0.8\textwidth]{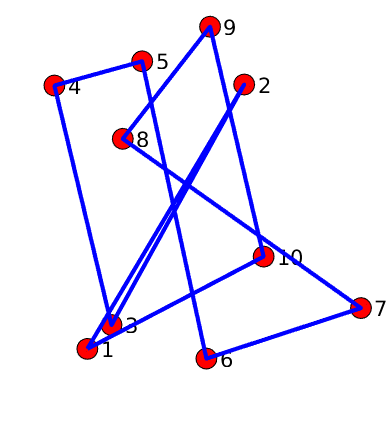}
		\end{minipage}
		\begin{minipage}[c]{0.32\textwidth}
			\centering
			\subfiguretitle{b)}
			\includegraphics[width=0.8\textwidth]{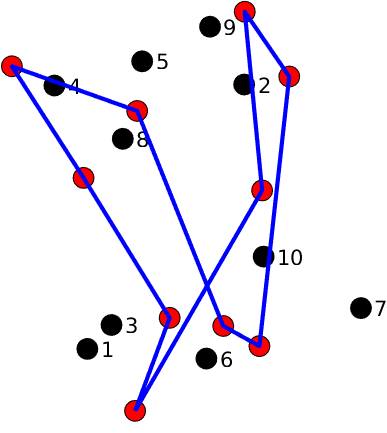}
		\end{minipage}
		\begin{minipage}[c]{0.32\textwidth}
			\centering
			\subfiguretitle{c)}
			\includegraphics[width=0.8\textwidth]{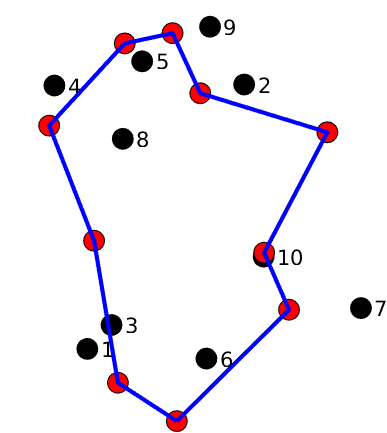}
		\end{minipage} \\[1em]
		\begin{minipage}[c]{0.32\textwidth}
			\centering
			\subfiguretitle{d)}
			\includegraphics[width=0.8\textwidth]{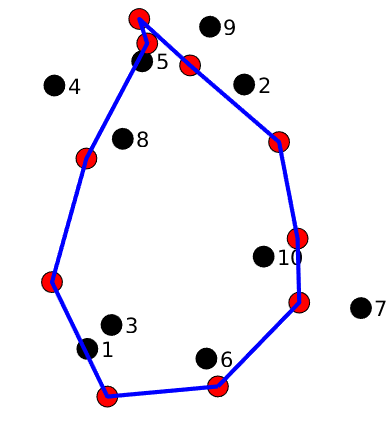}
		\end{minipage}
		\begin{minipage}[c]{0.32\textwidth}
			\centering
			\subfiguretitle{e)}
			\includegraphics[width=0.8\textwidth]{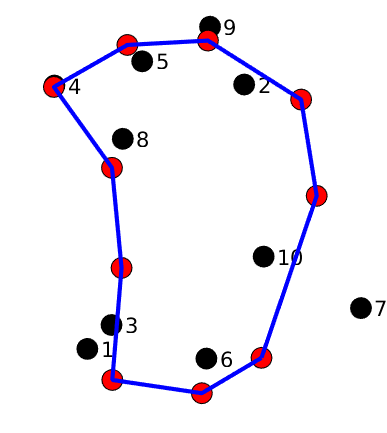}
		\end{minipage}
		\begin{minipage}[c]{0.32\textwidth}
			\centering
			\subfiguretitle{f)}
			\includegraphics[width=0.8\textwidth]{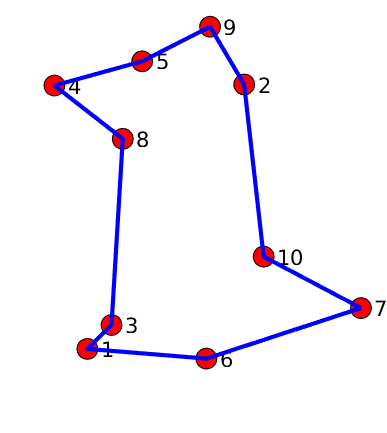}
		\end{minipage}
		\caption{Traveling salesman problem with 10 cities solved using the gradient flow~\eqref{eq:TSP flow}. The original positions of the cities are shown in black, the positions transformed by the orthogonal matrix $ P $ in red. a) Initial trivial tour given by $ \sigma = (1, \dots, 10) $. b--d) Intermediate solutions. e) Convergence to an orthogonal matrix which is ``close'' to a permutation matrix with respect to any matrix norm. f) Extraction of the corresponding permutation matrix. The initial tour was transformed into the optimal tour by the gradient flow.}
		\label{fig:P flow}
	\end{figure}
	
\end{example}
	
Next, we will perform a detailed numerical study of the gradient flow \eqref{eq:TSP flow} for a simple TSP with five cities.
We first consider \eqref{eq:TSP flow} without the equality constraints, i.e.,
\begin{equation}\label{eq:P_flow_noEqualityConstraint}
	\dot{P} = -P \left( \left\{ P^T B P, A \right\} + \left \{P^T B^T P, A^T \right\} \right).
\end{equation}
As mentioned previously, for the symmetric TSP, we set $A = D$ and $B = T_{\text{undir}}$. Solving \eqref{eq:P_flow_noEqualityConstraint} forward in time yields a solution that minimizes the cost function $F(P)$ in $\mathcal{O}_n$. This solution can also be computed by solving the two-sided orthogonal Procrustes problem \eqref{eq:TSOPP}. However, since the matrix $T_{\text{undir}}$ possesses repeated eigenvalues, the Procrustes problem has infinitely many solutions due to rotational degeneracy. An illustration of the \emph{Procrustes sets} is shown in Figure~\ref{fig:procrustesSolutions5Nodes}. We note that the structure of a Procrustes set becomes more and more complex for larger $n \in \mathbb{N}$ since the number of repeated eigenvalues also increases with $n$.

\begin{figure}[!htb]
	\begin{minipage}{0.49\textwidth}
	\includegraphics[width = \textwidth]{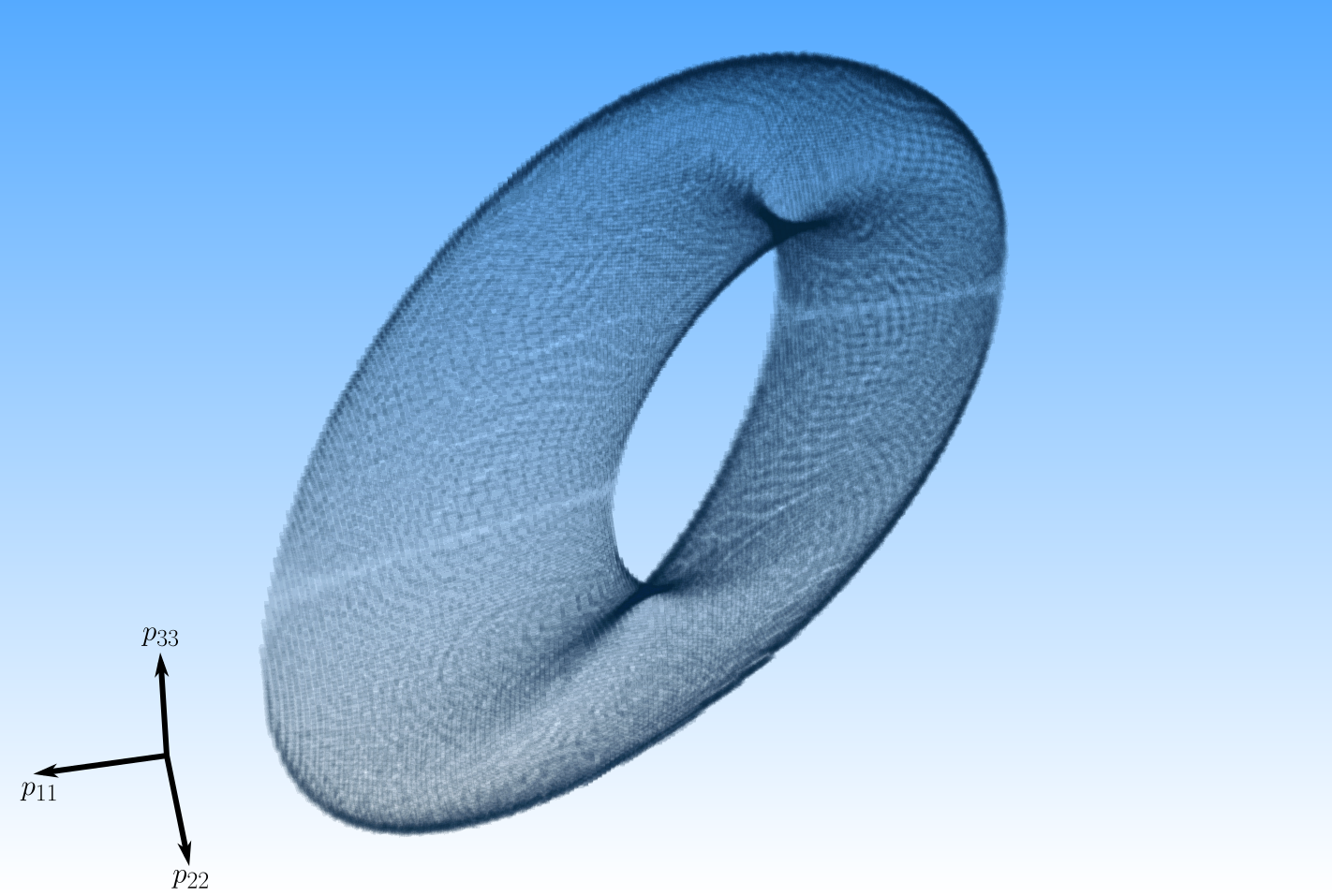}
	\centering \scriptsize{(a)}
\end{minipage}
\hfill
\begin{minipage}{0.49\textwidth}
	\includegraphics[width = \textwidth]{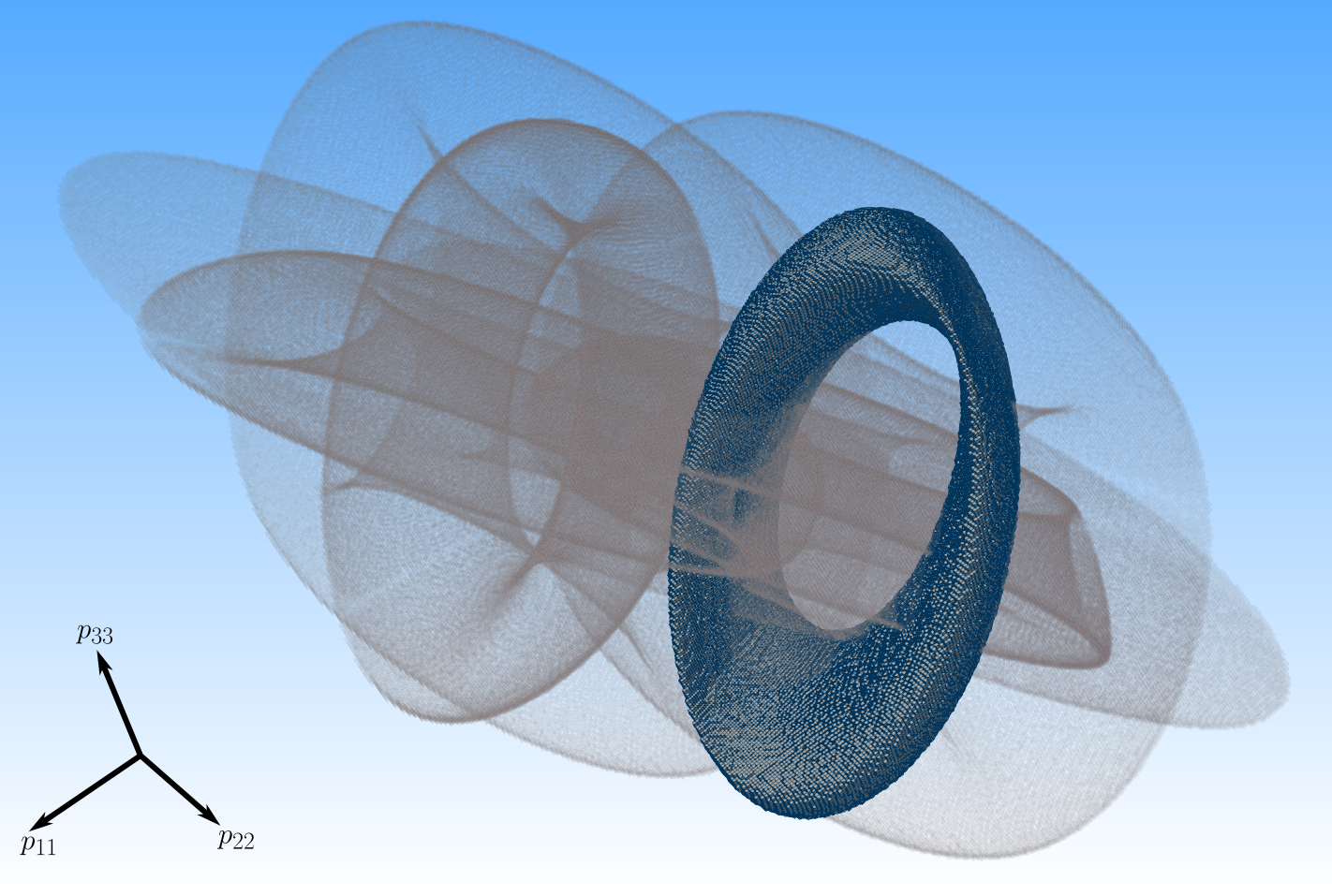}
	\centering \scriptsize{(b)}
\end{minipage}
\caption{(a) Three-dimensional projection of one of the $8$ Procrustes sets. (b) All $8$ procrustes sets in a three-dimensional projection highlighting the Procrustes set from~(a).}
\label{fig:procrustesSolutions5Nodes}	
\end{figure}

To shed light on the stability and local dynamics around the optimal TSP solutions we approximate \emph{subsets of the stable manifold} of the Procrustes solutions such that two permutation matrices are inside these sets. Note that the two matrices are explicitly shown in~\eqref{eq:two_perm_mat}. This numerical study will help us to understand if the Procrustes solutions are robust under small perturbations of the initial permutation matrix and whether or not the Procrustes solution is, in general, a viable approach to constructing useful initial conditions and determine how close the Procrustes solution is to the optimal permutation matrix.
In order to compute the sets of interest, we will use a set-oriented continuation technique developed in \cite{DH96}, which recently has been extended to the computation of unstable manifolds for infinite dimensional dynamical systems \cite{ZDG18}.

Let $\bQ \subset \mathbb{R}^{n^2}$ be a compact set within which we want to approximate the subset of the stable manifold of a permutation matrix $\bar P$. We define a partition $\bL$ of $\bQ$ to be a finite family of compact subsets of $\bQ$ such that
\[
\bigcup_{B \in \bL}B = \bQ\quad \mbox{and}\quad \mbox{int} B\cap \mbox{int} B' = \emptyset,\ \mbox{for all } B, B'\in \bL,~ B\neq B'.
\]
Moreover, we denote by $\bL(x) \in \bL$ the element of $\bL$ containing $x\in \bQ$. In our context, $x$ is a reordering of an (orthogonal) $n \times n$ matrix $\widetilde{P}$ into a vector, i.e., ${x = \mvec(\widetilde{P})}$ (cf.~\cite{PP08}). We consider a nested sequence $\bL_s,\ s \in \mathbb{N}$, of successively finer partitions of $\bQ$, requiring that for all $B \in \bL_s$ there exist cells $B_1,\ldots,B_m \in \bL_{s+1}$ such that ${B = \bigcup_i B_i}$ and ${\mbox{volume}(B_i) = \tfrac{1}{2}\,\mbox{volume}(B)}$.
A set $B \in \bL_s$ is said to be of \emph{level}~$s$.
The partition $\bL_{s+1}$ is computed by a subdivision procedure, where we subdivide each cell of the previous partition $\bL_{s}$ with respect to the $i$-th coordinate, where $i$ is varied cyclically (for more details see \cite{DH97}). However, these subdivision steps are only done virtually and we do not store the cells of the partition $\bL_{s+1}$.

We assume that $C = \bL_s(\mvec(\bar P))$ is the cell which contains the initial permutation matrix $\bar P$ in vector form. Furthermore, let us denote by $\Phi$ the time-$\tau$-map of \eqref{eq:P_flow_noEqualityConstraint}.
Then the numerical realization of the continuation algorithm for the approximation of the subsets of the stable manifold can be described as follows:\\

\begin{algorithm}[H]
	\caption{Approximation of arbitrary subsets of the stable manifold of \eqref{eq:P_flow_noEqualityConstraint}}\label{alg:continuation}
	\vspace{1em}
	Choose an initial cell $\bQ \subset \mathbb{R}^{n^2}$, defined by a $n^2$-dimensional generalized rectangle of the form
	\[
	\bQ(c,r) = \left\{ y \in \mathbb{R}^{n^2}: |y_i - c_i|\leq r_i \mbox{ for } i=1,\ldots,n^2 \right\},
	\]
	where $c,r \in \mathbb{R}^{n^2},\ r_i > 0$ for $i=1,\ldots,n^2$, are the center and the radii, respectively. Choose a partition $\bL_s$ of $\bQ$ and a cell $C \in \bL_s$ such that $\mvec(\bar P) \in C$.
	\begin{enumerate}
		\item Set $ C_0 = \{C\}$.
		\item Continuation step: For $j=0,1,2,\ldots$ define
		\begin{align*}\label{eq:continuation}
		C_{j+1} = \left\lbrace B \in \bL_{s}: \exists B' \in C_j \mbox{ such that }B\cap \left(\mvec\circ \Phi\left(\mvec^{-1}(B')\right)\right) \neq \emptyset\right\rbrace.
		\end{align*}
	\end{enumerate}
\end{algorithm}

\begin{remark}\label{rmk:continuation_algorithm}\quad
	\begin{enumerate}
		\item[a)] In the application of Algorithm~\ref{alg:continuation} we have to perform the continuation step
		\begin{align*}
		C_{j+1} = \left\lbrace B \in \bL_{s}: \exists B' \in C_j \mbox{ such that }B\cap \left(\mvec\circ \Phi\left(\mvec^{-1}(B')\right)\right) \neq \emptyset\right\rbrace.
		\end{align*}
		Numerically this is realized as follows: First, $\Phi$ is evaluated for a large number of test matrices $\mbox{vec}^{-1}(x),\ x \in B'$, for each cell $B' \in C_j$. Then a cell $B \in \mathcal{P}_s$ is added to the collection $C_{j+1}$ if there is at least one $x \in B'$ such that ${\left(\mbox{vec} \circ \Phi\left( \mbox{vec}^{-1}(x)\right)\right) \in B}$. In practice, the components $x\in B'$ of the test matrices can be chosen according to several different strategies: For low-dimensional problems one can choose them from a regular grid within each cell $B$. Alternatively, one can select them via a Monte-Carlo sampling. Observe however that, in general, $\mvec^{-1}(x) \notin \mathcal{O}_n$ for $x \in B'$. Hence, in order to construct an orthogonal matrix we compute a polar decomposition of $\mvec^{-1}(x)$ (cf.~\cite{High86}). The polar decomposition yields a small perturbation of the permutation matrix $\bar P$.
		\item[b)] The number of continuation steps $j$ crucially depends on the choice of the integration time $\tau$. In general, the smaller we choose $\tau$ the more continuation steps are done which leads to a higher computational cost. However, to prevent isolated cells in the covering, $\tau$ has to be chosen sufficiently small.
		\item[c)] The choice of the integration time $\tau$ can be relaxed by choosing a finite time grid $\{t_0,\ldots,t_N\}$, for $t_N = \tau$, where we mark all cells in $\bL_{s}$ which are hit in each time step. This allows us to increase the integration time without decreasing the quality of the covering obtained by Algorithm~\ref{alg:continuation}.
	\end{enumerate}
	
\end{remark}

For the five-dimensional example, we choose $\bQ = [-1,1]^{25}$, $\tau = 10000$ and a fine partition $\bL_s$ of $\bQ$ for $s=175$. In particular, this means that $\bQ$ is subdivided into $2^{175}$ cells of radius $r_i = \frac{1}{2^7}$, for $i=1,\ldots,25$. Moreover, we use the strategy described in Remark~\ref{rmk:continuation_algorithm}~c). In Figure~\ref{fig:mostAttractingSets}, we illustrate two subsets of the stable manifold containing the permutation matrices
\begin{equation}
	P_1 = \begin{pmatrix}	1 & 0 & 0 & 0 & 0 \\ 0 & 0 & 0 & 1 & 0 \\ 0 & 1 & 0 & 0 & 0 \\ 0 & 0 & 1 & 0 & 0 \\ 0 & 0 & 0 & 0 &1 \end{pmatrix} \quad \mbox{and} \quad P_2 = \begin{pmatrix}	0 & 0 & 0 & 0 & 1 \\ 0 & 0 & 1 & 0 & 0 \\ 0 & 1 & 0 & 0 & 0 \\ 0 & 0 & 0 & 1 & 0 \\ 1 & 0 & 0 & 0 & 0 \end{pmatrix}.
\label{eq:two_perm_mat}
\end{equation}
Small perturbations of these matrices result in different Procrustes solutions on the corresponding Procrustes set.
\begin{figure}[!htb]
	\centering
	\includegraphics[width=.8\textwidth]{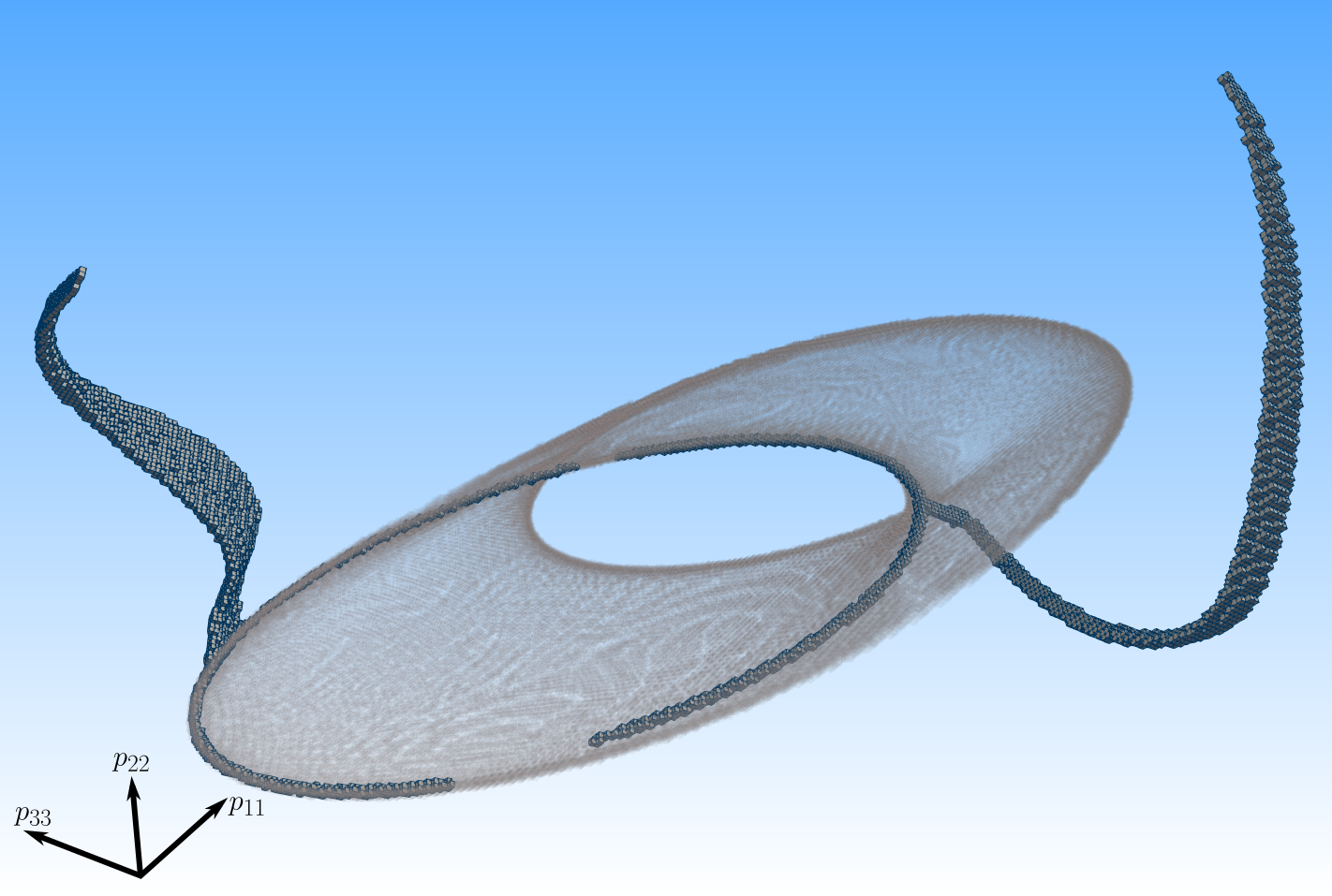}
	\caption{Three-dimensional projection of two subsets of the stable manifold. The omega-limit sets of a small neighborhood of the permutation matrices $P_1$ and $P_2$ form a half circle on their corresponding Procrustes set.}
	\label{fig:mostAttractingSets}
\end{figure}


Note that by using \eqref{eq:P_flow_noEqualityConstraint} we will always obtain a Procrustes solution, which is in general not a permutation matrix. In fact, a stability analysis shows that the optimal permutation matrix is hyperbolic with four unstable directions. We expect that those directions become stable as we enable the equality constraint in \eqref{eq:minPwithEqualityConstraints} and use the gradient flow \eqref{eq:TSP flow}. To this end, let us consider the flow \eqref{eq:TSP flow}. By using a forward difference scheme, we approximate the Jacobian in a small perturbation of the optimal permutation matrix $(\widetilde{P},\widetilde{\lambda})$ and compute the eigenvalues. The unstable directions become stable, but there are still four eigenvalues with positive (but small) real part. Note that the Lagrange multiplier, unfortunately, maps the \emph{optimal solution} of the TSP to \emph{saddle points}. Hence, although the optimal permutation matrix is still not stable, we expect the trajectories to stay in a small neighborhood of the optimal permutation matrix since the positive (unstable) eigenvalues have a small real part. Then we can extract the optimal permutation matrix.
%

Next we will numerically analyze how likely it is to obtain the optimal permutation matrix by using \eqref{eq:TSP flow}. Hence, we compute the basin of attraction (subsets of the stable manifold)
 of the optimal permutation matrix
\begin{equation*}
	\bar P = \begin{pmatrix}
	1 & 0 & 0 & 0 & 0 \\ 0 & 0 & 1 & 0 & 0 \\ 0 & 0 & 0 & 1 & 0 \\ 0 & 1 & 0 & 0 & 0 \\ 0 & 0 & 0 & 0 & 1
	\end{pmatrix}.
\end{equation*}
Again, we use an adaption of the set-oriented continuation technique by Dellnitz et al., where we integrate \eqref{eq:TSP flow} backward in time. Therefore, we only have to change the time-$\tau$-map $\Phi$ in Algorithm~\ref{alg:continuation} to the one that corresponds to the new dynamical system shown in~\eqref{eq:TSP flow}. We compute the small perturbations of the optimal permutation matrix by polar decomposition.
We set $\bQ = [-1,1]^{25}$ and choose $s = 175$ and $\tau = 10000$ as before. In Figure~\ref{fig:polar} (a)--(b) we show different three-dimensional projections of the basin of attraction of $(\widetilde{P},\widetilde{\lambda})$. The dark cells depict the stationary solutions of the gradient flow \eqref{eq:TSP flow} backward in time. There are four different stationary solutions which are also permutation matrices, two of which are shown below,
\begin{equation*}
	\bar P_1 = \begin{pmatrix}
	0 & 0 & 0 & 0 & 1 \\ 0 & 0 & 1 & 0 & 0 \\ 0 & 0 & 0 & 1 & 0 \\ 1 & 0 & 0 & 0 & 0 \\ 0 & 1 & 0 & 0 & 0
	\end{pmatrix} \quad \mbox{and}\quad \bar P_2 = \begin{pmatrix}
	1 & 0 & 0 & 0 & 0 \\ 0 & 0 & 0 & 0 & 1 \\ 0 & 0 & 0 & 1 & 0 \\ 0 & 0 & 1 & 0 & 0 \\ 0 & 1 & 0 & 0 & 0
	\end{pmatrix}.
\end{equation*}
Observe that the Procrustes sets of the gradient flow \eqref{eq:P_flow_noEqualityConstraint} are not covered by the basin of attraction. In fact, this is clear since we start in a local minimizer and we solve the gradient flow backward in time. This is equivalent to maximizing the cost function \eqref{eq:minPwithEqualityConstraints} subject to the equality constraints. Since the Procrustes solution is the global minimizer of the cost function without the constraints, it is, in general, not possible to find this solution via the gradient flow backward in time. Hence, it cannot be in the basin of attraction of the optimal solution of the TSP when using the dynamical system given by~\eqref{eq:TSP flow}.
The box-counting dimension (see, e.g., \cite{HK99}) of the covering of the basin of attraction is $d\approx 3.5$ (about $24$ million cells in the $25$-dimensional space).


%

\begin{figure}[!htb]
	\begin{minipage}{0.49\textwidth}
		\includegraphics[width = \textwidth]{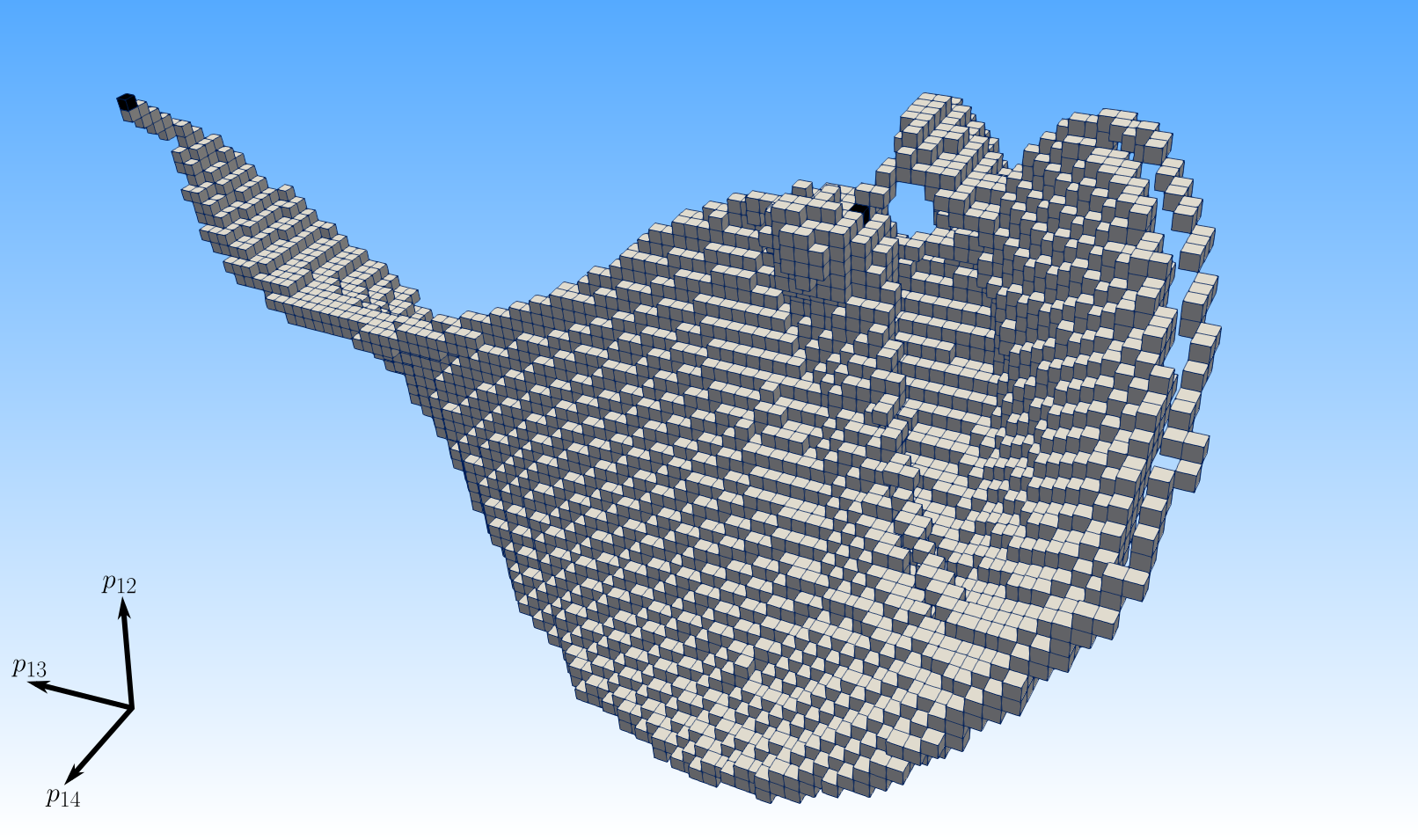}
		\centering \scriptsize{(a)}
	\end{minipage}
	\begin{minipage}{0.49\textwidth}
		\includegraphics[width = \textwidth]{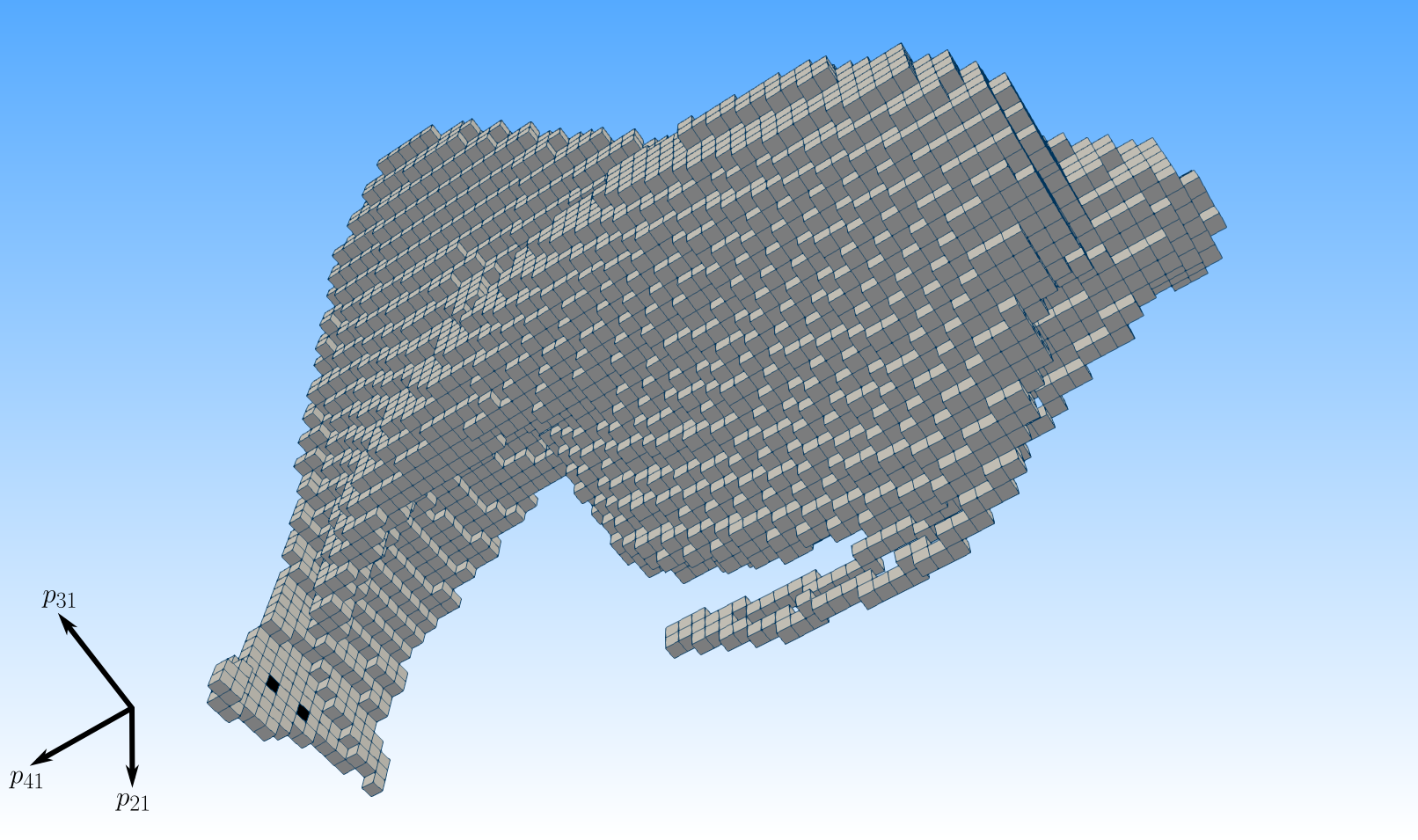}
		\centering \scriptsize{(b)}
	\end{minipage}

	\caption{(a)--(b) Three-dimensional projections of the basin of attraction of $(\widetilde P,\widetilde{\lambda})$. The dark cells depict the stationary solutions of the gradient flow \eqref{eq:TSP flow} backward in time.}
	\label{fig:polar}
\end{figure}


In an effort to address the complications due to the infinite number of Procrustes solutions, we now reformulate the dynamical systems using tour matrix representations of the solutions.


\subsection{Gradient flows for tour matrices}
\label{sec:tour_mat}
Instead of forcing the gradient flow to converge to a permutation matrix, an alternative approach is to define a cost function in such a way that the flow converges directly to a permutation of the initial tour matrix $ T $. For the symmetric case~\eqref{eq:SQAPFlow}, Brockett~\cite{Bro91} introduces a change of variables, given by $ H = P^T B P $. The resulting double bracket flow, $ \dot{H} = 2 \left[ H, \left[H, A \right] \right] $, then evolves in the space of symmetric matrices and is only quadratic in $ H $.

We now extend this to the nonsymmetric case. Assuming that the objective function $ F(P) $ can be rewritten as a function $ F(H) $, we want to derive a gradient flow for the new variable $ H $. For the TSP, again $ A = D $ and $ B = T $. Note that, as before, the $T$ matrix is replaced by $T_\text{undir}$ and $T_\text{dir}$ for symmetric and asymmetric TSP instances respectively. With the aforementioned transformation, the cost function for the relaxed TSP from Lemma~\ref{lem:relaxed QAP} can be written as
\begin{equation} \label{eq:LTSP}
    \min_{H \in \mathcal{T}_n} \tr \left( A^T H \right),
\end{equation}
where $ \mathcal{T}_n = \{ P^T T P \mid P \in \mathcal{O}_n \} $.

\begin{remark}
For directed cycle graphs, i.e., $ B = T_\text{dir} $, the non-relaxed version of \eqref{eq:LTSP} is identical to the cost of the linear assignment problem (LAP)~\cite{Cit:klus2014spectral}, with the difference that here the set of feasible solutions is constrained to a subset of $ \mathcal{P}_n $, which makes the problem NP-hard.
\end{remark}

\begin{theorem} \label{thm:H flow}
Let $ F(H) $ be a given cost function, then the gradient flow is given by
\begin{equation*}
    \dot{H} = -\left[ H, \left\{ H, F_H \right\} + \left\{ H^T, F_H^T \right\} \right].
\end{equation*}
\end{theorem}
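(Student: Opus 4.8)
The plan is to push the orthogonal-matrix gradient flow \eqref{eq:GradFlow}--\eqref{eq:grad F} forward through the change of variables $H = P^T B P$; the flow on $\mathcal{T}_n$ is by construction the image of the flow on $\mathcal{O}_n$ under the map $P \mapsto P^T B P$, so nothing beyond this computation is required. The first step is to express the Euclidean derivative $F_P$ in terms of $F_H$ via the chain rule. Writing $H_{kl} = \sum_{a,b} P_{ak} B_{ab} P_{bl}$ and differentiating gives $\partial H_{kl}/\partial P_{ij} = \delta_{kj}(BP)_{il} + \delta_{lj}(P^T B)_{ki}$, so that $(F_P)_{ij} = \sum_{k,l}(F_H)_{kl}\,\partial H_{kl}/\partial P_{ij}$ collapses to
\[
    F_P = B P F_H^T + B^T P F_H .
\]
As a consistency check, for $F(H) = \tr(A^T H)$ we have $F_H = A$, which reproduces the identity $\partial F/\partial P = BPA^T + B^T P A$ used in the proof of Lemma~\ref{lem:relaxed QAP}.

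Next I would eliminate $B$ using the orthogonality of $P$: since $PP^T = I$ and $H = P^T B P$, we have $BP = PH$ and $B^T P = PH^T$, hence $F_P = P\big(H F_H^T + H^T F_H\big)$. Substituting into $\grad{F}(P) = F_P - P F_P^T P$ from \eqref{eq:grad F} and using $P^T P = I$ to reduce $P F_P^T P = P\big(F_H H^T + F_H^T H\big)$, the gradient becomes
\[
    \grad{F}(P) = P\Big( \big(H^T F_H - F_H^T H\big) + \big(H F_H^T - F_H H^T\big) \Big).
\]
The two parenthesized blocks are exactly the generalized Lie brackets $\{H, F_H\}$ and $\{H^T, F_H^T\}$, so $\grad{F}(P) = P\,\Omega$ with $\Omega \coloneqq \{H, F_H\} + \{H^T, F_H^T\}$, and therefore the flow reads $\dot{P} = -P\Omega$.

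Finally, differentiating $H = P^T B P$ along this flow gives $\dot{H} = \dot{P}^T B P + P^T B \dot{P} = -\Omega^T P^T B P - P^T B P\,\Omega = -\Omega^T H - H\Omega$. Since $\{X, Y\}^T = Y^T X - X^T Y = -\{X, Y\}$ for all $X, Y$, the matrix $\Omega$ is skew-symmetric, so $-\Omega^T H = \Omega H$, whence $\dot{H} = \Omega H - H\Omega = -[H, \Omega]$, which is the claimed formula. I expect the one delicate point to be the index bookkeeping in the chain-rule step that yields $F_P = BPF_H^T + B^T P F_H$; after that, everything is algebra resting on the two structural facts $PP^T = I$ and the skew-symmetry of the generalized Lie bracket.
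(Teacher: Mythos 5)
Your proof is correct and follows essentially the same route as the paper: both arguments hinge on the chain-rule identity $F_P = B^T P F_H + B P F_H^T$ (the paper's Eq.~\eqref{eq:DerRel}, which you derive in index form rather than via single-entry matrices) and on pushing the flow $\dot{P} = -\grad F(P)$ through $H = P^T B P$, using skew-symmetry of the generalized bracket to turn $-\Omega^T H - H\Omega$ into $-[H,\Omega]$. The only cosmetic difference is the order of operations — you simplify $F_P$ to $P(HF_H^T + H^T F_H)$ before forming the bracket, whereas the paper first writes $\dot{H} = -[H,\{P,F_P\}]$ and substitutes afterwards — but the computation is the same.
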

\begin{proof}
Since $ H = P^T B P $, using \eqref{eq:GradFlow} and \eqref{eq:grad F} we obtain
\begin{equation*}
    \dot{H} = \dot{P}^T B P + P^T B \dot{P} = -\left[ H, \left\{ P, F_P \right\} \right].
\end{equation*}
Applying the chain rule, this leads to
\begin{align*}
    \pd{F}{P_{ij}} &= \tr \left( \left( \pd{F}{H} \right)^T \pd{H}{P_{ij}} \right) \\ 
                   &= \tr \left( F_H^T  \left( P^T B J^{ij} + J^{ji} B P \right) \right) \\ 
                   &= \left( B^T P F_H + B P F_H^T \right)_{ij}, 
\end{align*}
where $ J^{ij} \in \R^{n \times n} $ is a single-entry matrix~\cite{PP08}, i.e., $ (J^{ij})_{kl} = \delta_{ik}\delta_{jl} $. It follows that
\begin{equation} \label{eq:DerRel}
    F_P = B^T P F_H + B P F_H^T.
\end{equation}
Inserting this into the equation for $ \dot{H} $ concludes the proof.
\end{proof}

For the cost function $ F(H) = \tr(A^T H) $, we simply obtain $ F_H = A $. With the aid of Theorem~\ref{thm:H flow}, we can then compute the corresponding gradient flow. In addition to the cost function, a penalty function has to be used to find an admissible solution.

One possibility would be to penalize negative entries as described in Section~\ref{ssec:orthogonal_flow}. However, for the tour matrix approach we use a combination of two penalty functions. The first penalty function is given by,
\begin{equation*}
G_1(H) = \norm{H - H \circ H}_F,
\end{equation*}
which penalizes entries that are not zero or one.
Furthermore, we also use a second penalty function
\begin{equation}
	G_2(H) = \| A\cdot\mvec(H) - b \|_2^2,
\end{equation}
where $A \in \mathbb{R}^{3n\times n^2},\ b \in \mathbb{R}^{3n}$ are linear equality constraints that force the flow to converge to a matrix $H$ with row and column sums equal to two and diagonal entries equal to zero.
Given both penalty functions, we consider the following TSP as a constrained optimization problem,
\begin{equation}\label{eq:minimizeTour}
	\begin{aligned}
	 \min_{H\in \mathcal{T}_n} &\quad\mbox{tr}(A^TH),\\
		\mbox{s.t.} & \quad G_1(H) = c\,\| H-H\circ H \|_F^2 = 0,\\
		& \quad G_2(H) = c\,\| A\cdot\mvec(H) - b \|_2^2 = 0.
		\end{aligned}
\end{equation}
In order to solve \eqref{eq:minimizeTour}, we will solve the resulting system of differential equations
\begin{equation}\label{eq:tourFlow}
	\begin{aligned}
		\dot H =& -\big[ H, \big\{ H, F_H +c\,\lambda_1 G_{1,H} + c\,\lambda_2 G_{2,H} \big\}  \\
		 & + \left\{ H^T, F_H^T +c\,\lambda_1 G_{1,H}^T + c\,\lambda_2 G_{2,H}^T\right\} \big],\\
		\dot \lambda_1 =&\,c\,G_1(H),\\
		\dot \lambda_2 =&\,c\,G_2(H),
	\end{aligned}
\end{equation}
where $G_{1,H} = 2(H-H\circ H)\circ (E-2H)$ and $G_{2,H} = \mvec^{-1}\left(2(A\cdot \mbox{vec(H)}-b)^TA\right)$. Here, $ E $ denotes the matrix of ones. Again, we perform a gradient descent for the cost function and a gradient ascent for the Lagrange multipliers.
\begin{example} \label{ex:H flow}
	Let us illustrate the gradient flow with the same TSP with 10 cities as in Example~\ref{ex:P flow}. Using~\eqref{eq:tourFlow}, we obtain the results shown in Figure~\ref{fig:H flow}. We plot only the entries of the $ H $ matrix that are greater than zero. The dynamical system converges to a tour that is slightly longer than the optimal tour.
	
	\begin{figure}[htb]
		\centering
		\begin{minipage}[c]{0.32\textwidth}
			\centering
			\subfiguretitle{a)}
			\includegraphics[width=0.8\textwidth]{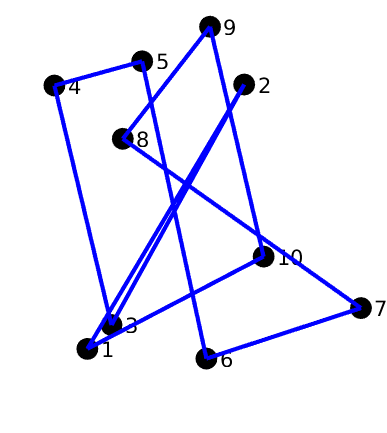}
		\end{minipage}
		\begin{minipage}[c]{0.32\textwidth}
			\centering
			\subfiguretitle{b)}
			\includegraphics[width=0.8\textwidth]{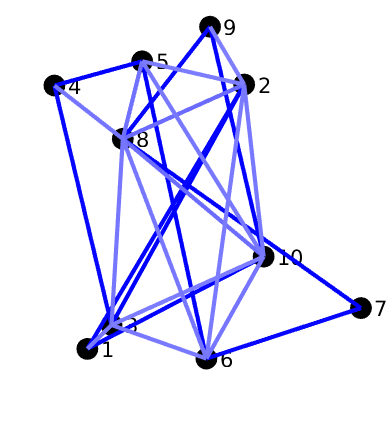}
		\end{minipage}
		\begin{minipage}[c]{0.32\textwidth}
			\centering
			\subfiguretitle{c)}
			\includegraphics[width=0.8\textwidth]{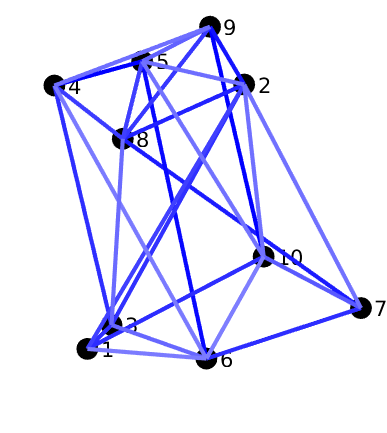}
		\end{minipage} \\[1em]
		\begin{minipage}[c]{0.32\textwidth}
			\centering
			\subfiguretitle{d)}
			\includegraphics[width=0.8\textwidth]{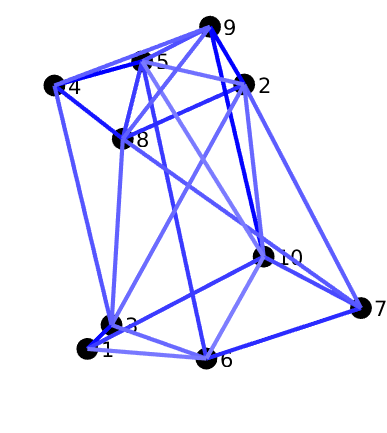}
		\end{minipage}
		\begin{minipage}[c]{0.32\textwidth}
			\centering
			\subfiguretitle{e)}
			\includegraphics[width=0.8\textwidth]{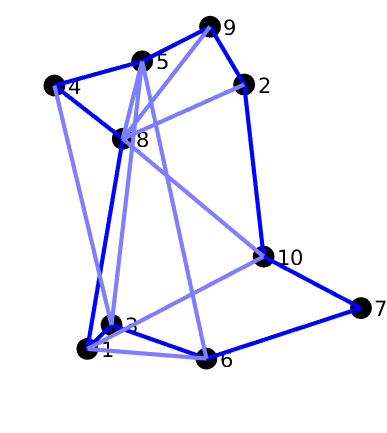}
		\end{minipage}
		\begin{minipage}[c]{0.32\textwidth}
			\centering
			\subfiguretitle{f)}
			\includegraphics[width=0.8\textwidth]{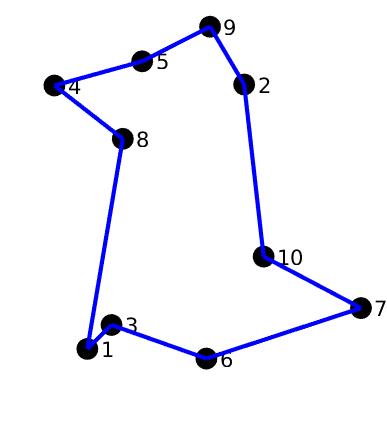}
		\end{minipage}
		\caption{Traveling salesman problem with 10 cities solved using the gradient flow~\eqref{eq:tourFlow}. a)~Initial trivial tour $ T $. b--d)~Intermediate solutions. e)~Optimal $ H $ matrix.~f)~Final solution. The initial tour matrix was transformed into a new shorter tour which is almost optimal (cf.~Figure~\ref{fig:P flow} f)).}
		\label{fig:H flow}
	\end{figure}
	
\end{example}
Analogously to Section~\ref{ssec:orthogonal_flow}, we will numerically analyze the tour matrix flow \eqref{eq:tourFlow} for the simple TSP with five cities. We first note that there exists only one Procrustes solution in $\mathcal{T}_n$, i.e., $H^* = (P^*)^T T P^*$, where $T = T_\text{undir}$ (since the distances between cities are symmetric). We note that the advantage of this formulation is that it avoids the infinite number of Procrustes solutions that were described in the previous section. In other words, even though there exist an infinite number of $P^{*}$ solutions, $H^{*}$ is unique.  Furthermore, the Procrustes solution in $\mathcal{T}_n$ is invariant under basis rotations due to repeated eigenvalues. Hence, the corresponding Procrustes set is entirely captured by $H^*$.
In order to analyze the stability of the optimal tour matrix, we take a small perturbation of the optimal tour and solve the tour matrix flow \eqref{eq:tourFlow}. This results in a matrix $\widetilde H$ with the corresponding Lagrange multipliers $\widetilde\lambda_1$ and $\widetilde\lambda_2$. Observe that $\widetilde H$ only lies in a small neighborhood of the optimal tour. Now we are in a position to compute the Jacobian in $(\widetilde H, \widetilde{\lambda}_1,\widetilde{\lambda}_2)$ using a forward difference scheme. There exist five eigenvalues with positive (but small) real part, thereby giving rise to saddle points. Thus, the optimal tour is again not stable, but we do expect that by using the tour flow \eqref{eq:tourFlow} we will stay in a small neighborhood of the optimal tour.

Finally, we again compute the basin of attraction of the optimal tour matrix for the five cities example,
\begin{equation*}
\bar H = \bar P^T  T  \bar P = \begin{pmatrix}
0 & 0 & 1 & 0 & 1 \\ 0 & 0 & 0 & 1 & 1 \\ 1 & 0 & 0 & 1 & 0 \\ 0 & 1 & 1 & 0 & 0 \\ 1 & 1 & 0 & 0 & 0
\end{pmatrix}.
\end{equation*}
Observe that a symmetric matrix $H \in \mbox{Symm}_5 = \{ A \in \R^{5 \times 5}\, \vert\, A = A^T\}$ is fully described by the lower left or upper right part of the matrix. Moreover, as described previously, the tour matrix flow \eqref{eq:tourFlow} is a matrix differential equation evolving on the manifold of $\mbox{Symm}_5$. Thus, starting with a symmetric matrix $H \in \mbox{Symm}_5$, the trajectory remains for all time in $\mbox{Symm}_5$. This allows us to choose $\bQ = [-2,2]^{15}$ and a fine partition $\bL_s$ of $\bQ$ for $s = 105$. Again, we set $\tau = 10000$ and make use of Remark~\ref{rmk:continuation_algorithm}~c). Small perturbations of the optimal tour matrix $\bar H$ are computed as follows: Let $C \in \bL_{105}$ be the cell which contains the components of the optimal tour matrix $\bar H$. A point $x \in C$ defines the components of a lower triangular matrix. By taking the lower left part of $\widetilde\mvec^{-1}(x)$, we can directly create a symmetric matrix, which is a small perturbation of the optimal tour. Observe that $\widetilde{\mvec}^{-1}$ has to be adapted accordingly.

In Figure~\ref{fig:BOATour} (a)--(b) we show different three-dimensional projections of the basin of attraction of $(\widetilde H,\widetilde{\lambda}_1,\widetilde{\lambda}_2)$. The dark cells depict the stationary solutions of \eqref{eq:tourFlow} backward in time. We note that there are five different tour matrices to which the gradient flow converges backward in time, two of the five unique matrices (each one depending on a permutation matrix with a unique cycle) are shown below,
\begin{equation*}
\bar H_1 = \begin{pmatrix}
0 & 0 & 0 & 1 & 1 \\ 0 & 0 & 1 & 0 & 1 \\ 0 & 1 & 0 & 1 & 0 \\ 1 & 0 & 1 & 0 & 0 \\ 1 & 1 & 0 & 0 & 0
\end{pmatrix} \quad \mbox{and}\quad \bar H_2 = \begin{pmatrix}
0 & 0 & 0 & 1 & 1 \\ 0 & 0 & 1 & 1 & 0 \\ 0 & 1 & 0 & 0 & 1 \\ 1 & 1 & 0 & 0 & 0 \\ 1 & 0 & 1 & 0 & 0
\end{pmatrix}.
\end{equation*}
\begin{figure}[!htb]
	\begin{minipage}{0.49\textwidth}
		\includegraphics[width = \textwidth]{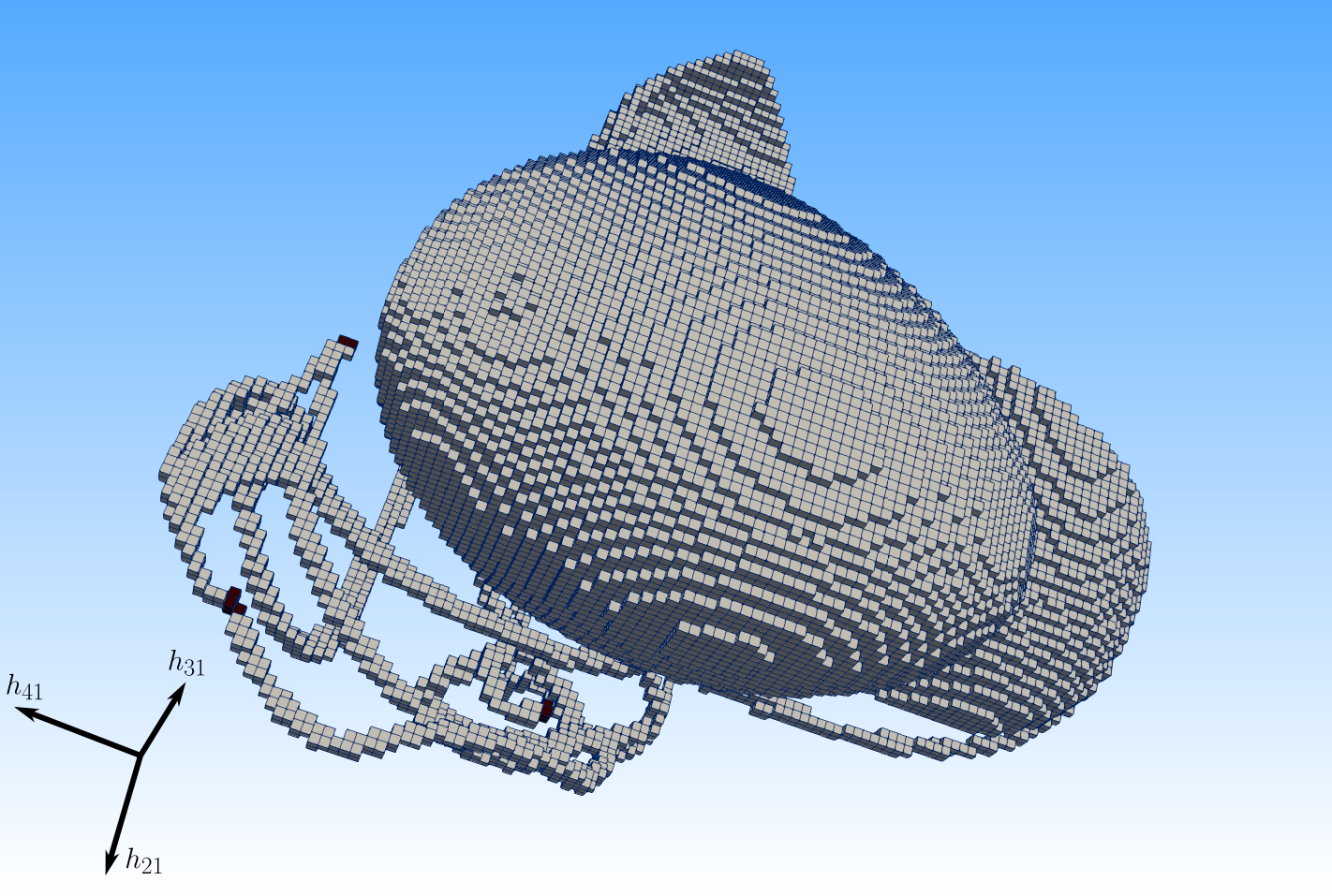}
		\centering \scriptsize{(a)}
	\end{minipage}
	\begin{minipage}{0.49\textwidth}
		\includegraphics[width = \textwidth]{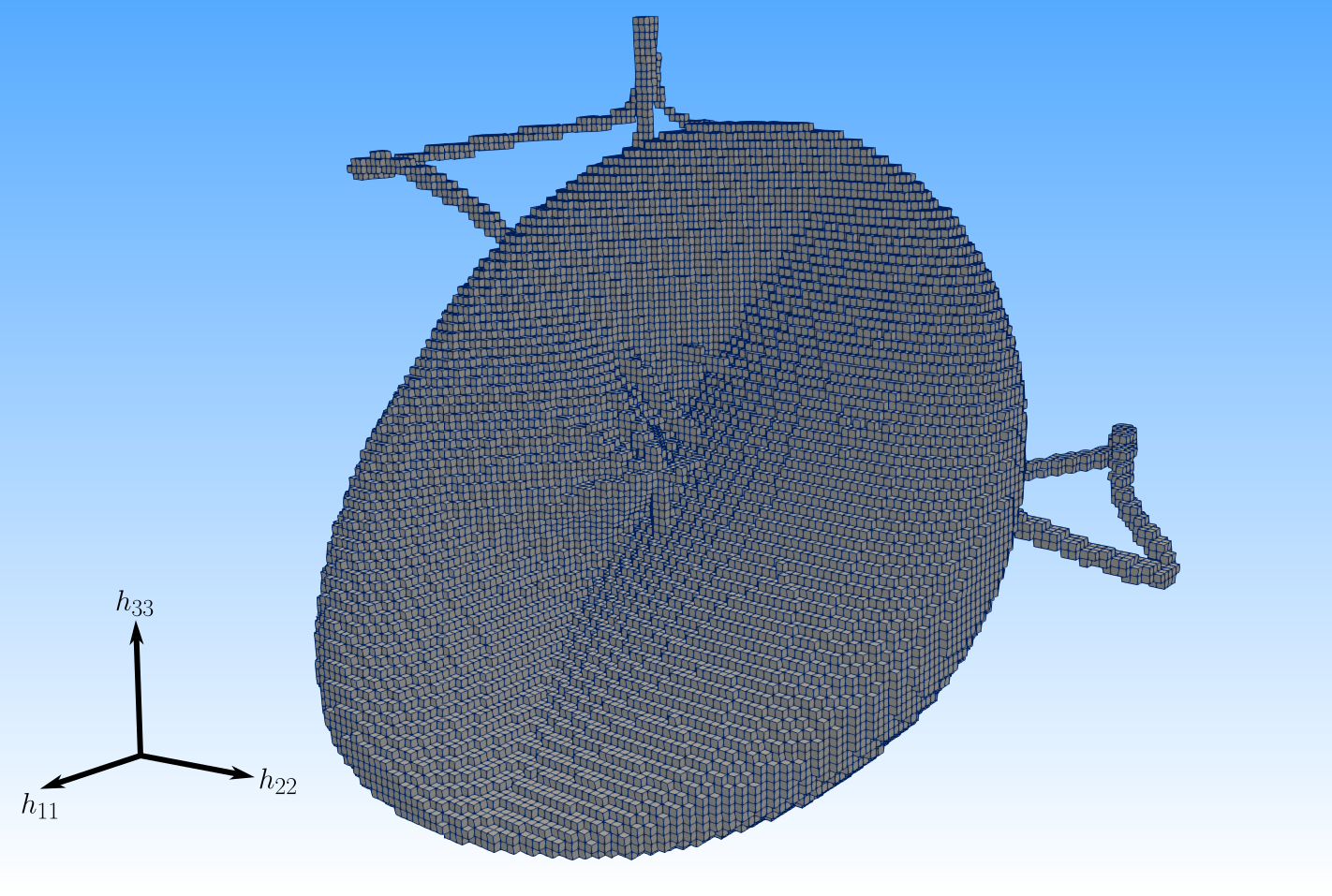}
		\centering \scriptsize{(b)}
	\end{minipage}
	
	\caption{(a)--(b) Three-dimensional projections of the basin of attraction of $(\widetilde H,\widetilde{\lambda}_1,\widetilde{\lambda}_2)$. The dark cells depict the stationary solutions of the gradient flow backward in time.}
	\label{fig:BOATour}
\end{figure}

Furthermore, the box-counting dimension of the basin of attraction is about ${d \approx 3.28}$.

Although the flows described above are interesting and display the complexity and challenges presented by NP-hard problems from a dynamical system lens, we find that the solutions computed by the above methods are not competitive when compared to state-of-the-art methods. Our goal in the next section is to construct new variants of existing state-of-the-art methods that are inspired by our work above.
\section{Procrustes-based Lin--Kernighan heuristic}
\label{sec:Procrustes}

In this section, we will propose a method to compute candidate sets based on the relaxed problem~\eqref{eq:RelTSPCost} or \eqref{eq:LTSP}, respectively. The solution, which is given by the solution of the Procrustes problem (see section~\ref{sec:twosidedproc}), can be computed analytically. Note that the solutions computed using this approach are optimal solutions of flows (for $P$ and $H$) described in previous sections.

Thus, the optimal solution which minimizes the relaxed TSP cost function~\eqref{eq:RelTSPCost} can be obtained using the solution of the Procrustes problem described in Section~\ref{sec:twosidedproc}. Namely, the solution is given in terms of matrix $ P^* = V_T V_D^T $ \cite{AW00,HRW92}, where the eigenvectors in the two matrices are sorted with respect to increasing eigenvalues of $T$ and decreasing eigenvalues of $D$ or vice versa. Define $ T^* = P^{*T} T P^* = V_D \Lambda_T V_D^T $ to be the solution of the two-sided orthogonal Procrustes problem or the minimum of the tour flow. Note that $ T^* $ is also the optimal solution of the gradient flow \eqref{eq:tourFlow} without both equality constraints, i.e.,
\begin{align*}
	\dot H =& -\big[ H, \big\{ H, F_H\big\}+\big\{ H^T, F_H^T\big\}\big].
\end{align*}
Roughly speaking, $ T^* $ can be interpreted as a \emph{continuous} solution of the relaxed TSP where the entry $ t_{ij}^* $ describes the strength of edge $ (i, j) $. We would like to now use the entries of $T^*$ to help inform the Lin--Kernighan heuristic. In particular, we use the solution of the Procrustes problem to limit the search space and to improve the efficiency of local heuristics. The aim is to bias $ k $-opt moves in a manner such that edges with high edge strengths are included with high probability. We compute candidate sets based on the entries $ t^*_{ij} $ of the matrix $ T^* $ and call these matrix values \emph{$ P $-nearness}. In our proposed approach, for each city, we pick the cities with the largest entries $ t_{ij}^* $.

\begin{example}
Let us consider again the TSP from Examples~\ref{ex:P flow} and~\ref{ex:H flow}. Figure~\ref{fig:OptPerm and OptRot} shows the difference between the optimal solution of the TSP and the optimal solution of the Procrustes problem. We use a linear interpolation between blue (large $ t_{ij}^* $ value) and white (small $ t_{ij}^* $ value). Note that this is the same matrix as in Figure~\ref{fig:H flow}d, with the difference that we are plotting a few more edges here to illustrate $ P $-nearness. Clearly, some edges of the optimal tour are already visible in Figure~\ref{fig:OptPerm and OptRot}b, for example $ (2, 9) $ and $ (5,9) $, while other edges such as $ (3, 10) $ or $ (2, 5) $ have a much lower weight. For city $ 10 $, different choices exist, $ (2, 10) $, $ (6, 10) $, $(7,10)$, and $ (8, 10) $, for instance, have a high probability of being part of the shortest tour.

\begin{figure}[htb]
    \centering
    \begin{minipage}[t]{0.32\textwidth}
        \centering
        \subfiguretitle{a)}
        \includegraphics[width=0.8\textwidth]{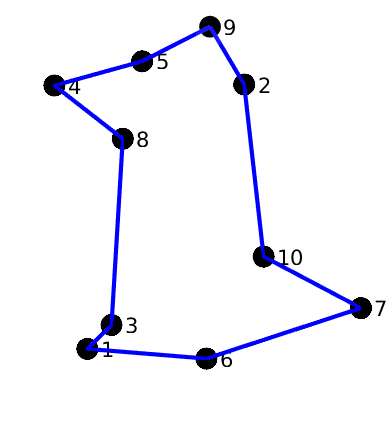}
    \end{minipage}
    \begin{minipage}[t]{0.32\textwidth}
        \centering
        \subfiguretitle{b)}
        \includegraphics[width=0.8\textwidth]{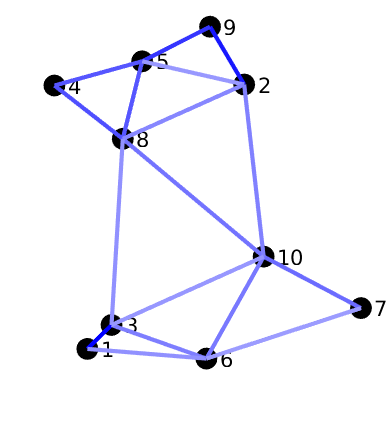}
    \end{minipage}
    \caption{Solution of a TSP with 10 cities. a) Optimal permutation matrix. b) Optimal orthogonal matrix.}
    \label{fig:OptPerm and OptRot}
\end{figure}

\end{example}

In Figure~\ref{fig:distance_vs_procrustes}, we show the edges computed using the Procrustes solution. In particular, for two random TSP instances ($50$ city and $100$ city examples), we show the shortest edges in the left-most column and the edges from the Procrustes solution in the right-most column. The optimal tour is plotted in the middle column. It is evident from the figure that the Procrustes solution $T^{*}$ tends to capture most of the edges in the optimal tour. Note that, while the $ \alpha $-nearness values can be computed in $ \mathrm{O}(n^2) $~\cite{HK70,Hel98,Cit:keld2}, the computation of the $ P $-nearness values is $ \mathrm{O}(n^3)$.

\begin{figure}[htb]
    \centering
    \includegraphics[width=0.95\textwidth]{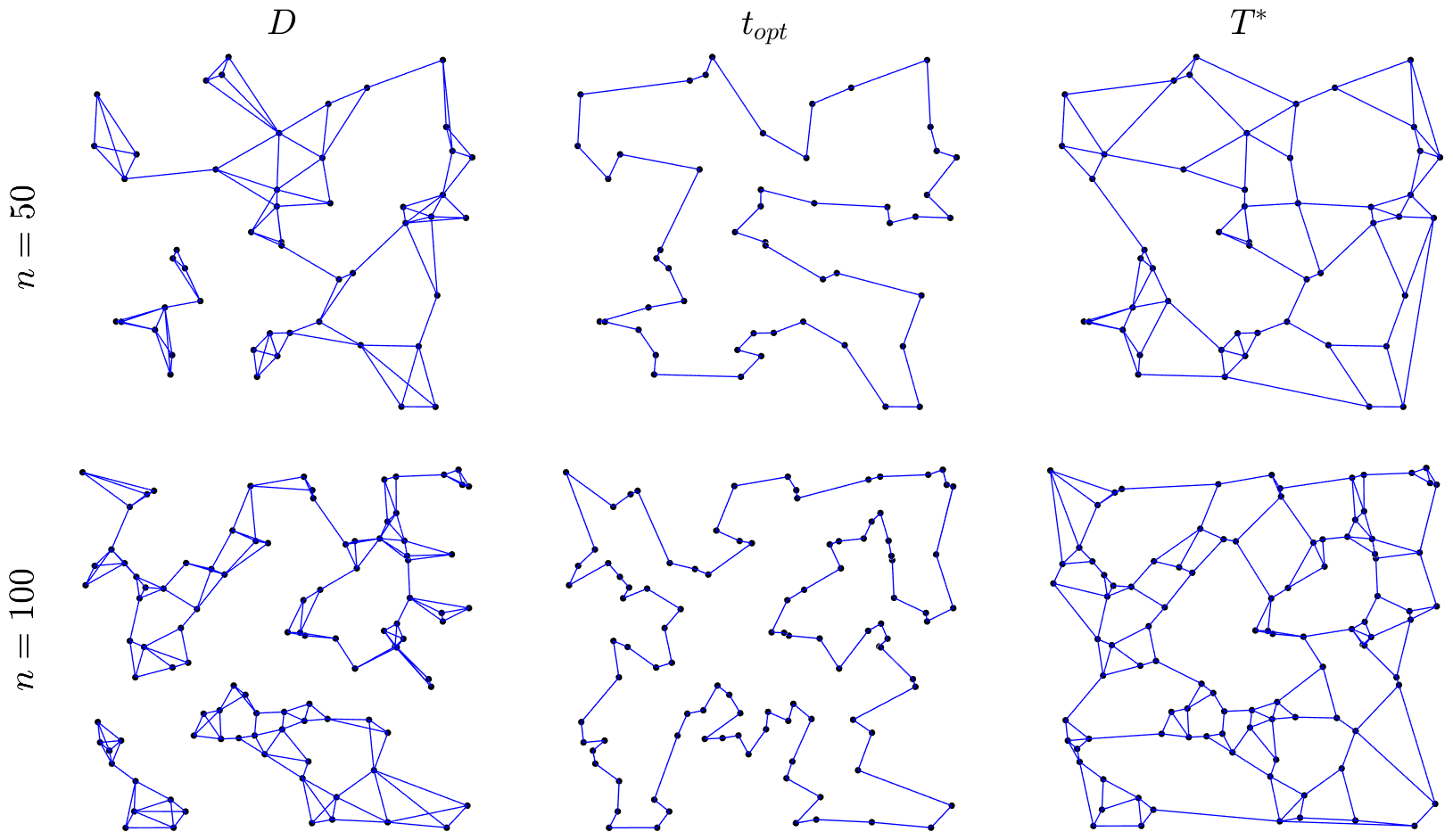}
    \caption{Illustration of $ P $-nearness for random TSP instances of size $50$ and $100$. The left column contains the edges with shortest distance, the center column has the optimal tour for the instances, and the right column contains the edges with the highest $ P $-nearness values for each city. For each city, we plotted the three edges with the highest nearness values.}
    \label{fig:distance_vs_procrustes}
\end{figure}

\subsection{Improving the Procrustes solution}

Analogous to the $\alpha$-nearness approach, we now describe our methodology to obtain better $P$-nearness values than those computed from the solution of the Procrustes problem alone. As described in section~\ref{ssec:subgradientOptimization}, the $\alpha$-nearness values were improved using subgradient optimization. In the $P$-nearness setting,  the principal idea is to construct a homotopy between the original TSP distance matrix $D$ and the solution of the Procrustes solution. Intuitively, one desires the candidate sets to include `several' short edges and a `few' long edges. Note that simply picking the shortest edges from $D$ gives rise to greedy solutions that are usually not competitive since they require the addition of long edges to complete the tours~\cite{Cit:cook}.

We find that the Procrustes solution $ T^* $ tends to select too many long edges (as shown in Figure~\ref{fig:homotopy}a). If we use the entries of $ T^* $ to bias the Lin--Kernighan heuristic, the solutions are found to be close, but less competitive than the standard approach. An effort to reduce the number of long edges is equivalent to making the computed solution ``greedy'' by picking edges based on the distance matrix $D$. Thus, we construct a homotopy $ \tilde{H} $ of the form,
\begin{equation*}
    \tilde{H} = T^{*} - \lambda D.
\end{equation*}
The candidate sets for varying $\lambda$ are shown for an example TSP instance in Figure~\ref{fig:homotopy}. To find the optimal $\lambda$ we use ideas from graph clustering~\cite{Cit:sahai_hearing}, where one computes the existence of disconnected clusters in the graph. In particular, we increase $\lambda$ until the graph of candidate sets is almost disconnected (separated into clusters). This optimal $\lambda$ is found by either marching in $\lambda$ or using a bisection approach.

There are multiple ways that one can compute the connectedness of a graph. In particular, one can use a depth-first search based approach~\cite{Cit:intro_algorithms} or perform computations on the graph Laplacian~\cite{Cit:chung,Cit:Fiedler1,Cit:Fiedler2}. The rank of the graph Laplacian matrix is related to the number of connected components in the graph~\cite{Cit:chung}. In our work, we pick the graph Laplacian approach for computing connected components in the graph (by looking at the multiplicity of the zero eigenvalue). Note that these computations can also be performed in the distributed setting~\cite{sahai2010wave,Cit:sahai_hearing}. If varying $\lambda$ does not give rise to a disconnected graph, we set $\lambda=1$. Alternatively, one can use the $ D $ matrix (in place of $T^* -\lambda D$) to bias the $k$-opt moves in the Lin--Kernighan heuristic. If the candidate sets based on distance only are connected, this typically implies that the $k$-opt moves converge quickly to the shortest tour.

\begin{figure}[htb]
    \centering
    \begin{minipage}[t]{0.3\textwidth}
        \centering
        \subfiguretitle{a) $\lambda=0.0$}
        \includegraphics[width=\textwidth]{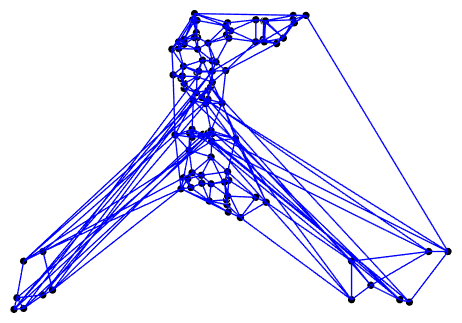}
    \end{minipage}
    \begin{minipage}[t]{0.3\textwidth}
        \centering
        \subfiguretitle{b) $\lambda=0.5$}
        \includegraphics[width=\textwidth]{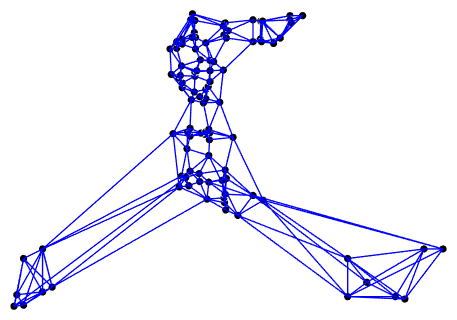}
    \end{minipage}
    \begin{minipage}[t]{0.3\textwidth}
        \centering
        \subfiguretitle{c) $\lambda=1.0$}
        \includegraphics[width=\textwidth]{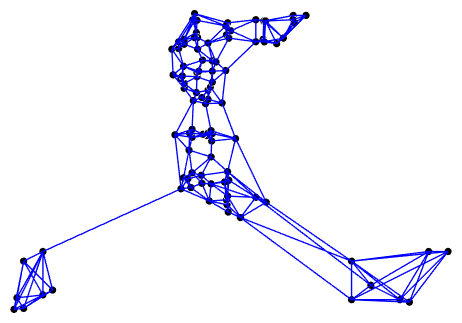}
    \end{minipage}
    \caption{Homotopy of $T^{*} -\lambda D$.}
    \label{fig:homotopy}
\end{figure}
\section{Results}
\label{sec:Results}

To compare different candidate sets or methods to bias $ k $-opt, Helsgaun computes the ``average rank'' of the edges which form the shortest tour~\cite{Hel98}. The ranking is essentially an ordering on set of edges for each node. This ordering captures the ``likelihood'' of an edge being in the optimal tour and is typically computed using the $\alpha$-nearness values. In our proposed approach, the $\alpha$-nearness values are replaced by $P$-nearness values. Thus, the optimal average rank is $ 1.5 $, all edges belonging to the shortest tour have either rank $ 1 $ or $ 2 $. We found that the average rank is in general not a good metric for the quality of the nearness values or candidate sets. Although the average rank of the Procrustes solution is typically much higher than the average rank of the $ \alpha $-nearness values, $ k $-opt often converges faster to the shortest tour.

In order to compare $ \alpha $-nearness and $ P $-nearness, we compute tours using Helsgaun's LKH package. For each LKH run, we generate the candidate sets based on the $ \alpha $-nearness and $ P $-nearness values. Starting from initial tours computed using $\alpha$-nearness and $P$-nearness, respectively, we compare the resulting tour lengths after a fixed number of $ k $-opt steps.

In Table~\ref{tab:tsplib}, we compare $ 22 $ well-known instances of the TSP from the TSPLIB database~\cite{Cit:tsplib}. The size of the candidate sets in these computations is fixed, we compute $ 5 $ candidates for each city using $ \alpha $-nearness or $ P $-nearness values, respectively. Starting from a random initial tour that is generated from the respective candidate sets, we perform a fixed number of $ 8 n $ $k$-opt moves, where $ n $ is the number of cities. We find that in this setting, the $ P $-nearness based approach typically converges faster than $ \alpha $-nearness. For example, after $ 8 n $ steps, $ P $-nearness based LKH converges to lower cost values in $18$ of the instances when compared to $\alpha$-nearness based LKH. Moreover, we ran $50$ random TSP instances of size $1000$  (cities) and found that $P$-nearness had lower tour costs after a fixed number of $k$-opt moves in $31$ of the instances, hence resulting in better solutions in $62\%$ of the instances. Note that if we run both, $\alpha$-nearness and $P$-nearness based LKH, to convergence, both methods compute the best known optimal tours in these instances. Since the initial tours are constructed using the candidate sets, the starting costs may occasionally differ slightly when comparing $\alpha$-nearness with $P$-nearness.

We do not present runtime results comparing the two algorithms since our prototype code was implemented in MATLAB and is consequently unable to compete with LKH (implemented in C) in speed. Our MATLAB implementation also limits the size of the TSP instances that we can handle. In future work, we intend to re-implement our algorithm in C for greater scalability and performance. Moreover, our approach will have higher computational cost than $\alpha$-nearness based methods due to the $O(n^3)$ eigenvector computations. However, given that the $P$-nearness requires fewer iterations of the Lin--Kernighan heuristic, we conjecture that by combining our approach with fast spectral methods~\cite{cit:schafer2017owhadi}, one can construct a highly competitive TSP approach.

\begin{table}[htbp]
	\caption{Comparison of $\alpha$-nearness and $P$-nearness based Lin--Kernighan heuristic on TSPLIB instances. The size of the candidate sets is set to $5$ per city and we stop the computations after $ 8 n $ $k$-opt moves in LKH, where $n$ is the number of cities. Out of the $ 22 $ TSPLIB instances, $P$-nearness computes a better solution in $18$ of the cases.}
	\centering
	\begin{tabular}{|l|r|r|r|}
		\hline
		\textbf{TSP} & $\alpha$-nearness & $P$-nearness & improvement \\
		\hline
		d198    &  16540 &  16465 &  0.45 \%  \\ \hline
		pcb442  &  50785 &  50832 & -0.09 \% \\ \hline
		d493    &  36028 &  35023 &  2.79 \% \\ \hline
		u574    &  36984 &  36926 &  0.16 \% \\ \hline
		rat575  &   6796 &   6790 &  0.09 \% \\ \hline
		p654    &  35716 &  37039 & -3.70 \% \\ \hline
		d657    &  49504 &  49158 &  0.70 \% \\ \hline
		u724    &  42295 &  41904 &  0.92 \% \\ \hline
		rat783  &   9054 &   8810 &  2.69 \% \\ \hline
		pr1002  & 261797 & 259810 &  0.76 \% \\ \hline
		u1060   & 224510 & 224552 & -0.20 \% \\ \hline
		vm1084  & 244411 & 242573 &  0.75 \% \\ \hline
		pcb1173 &  56934 &  56915 &  0.03 \% \\ \hline
		d1291   &  53357 &  51610 &  3.27 \% \\ \hline
		rl1323  & 279810 & 275904 &  1.40 \% \\ \hline
		nrw1379 & 141510 &  67035 & 52.63 \%\\ \hline
		fl1400  &  21319 &  22775 & -6.83 \% \\ \hline
		u1432   & 153213 & 153054 &  0.10 \% \\ \hline
		fl1577  &  28217 &  24357 & 13.68 \% \\ \hline
		d1655   &  95532 &  64837 & 32.13 \% \\ \hline
		u1817   &  58351 &  58213 &  0.24 \% \\ \hline
		rl1889  & 345475 & 340271 &  1.51 \% \\ \hline
	\end{tabular} 	
	\label{tab:tsplib}
\end{table}

\begin{figure}[htb]
    \centering
    \begin{minipage}[t]{0.45\textwidth}
        \centering
        \subfiguretitle{a)}
        \includegraphics[width=0.8\textwidth]{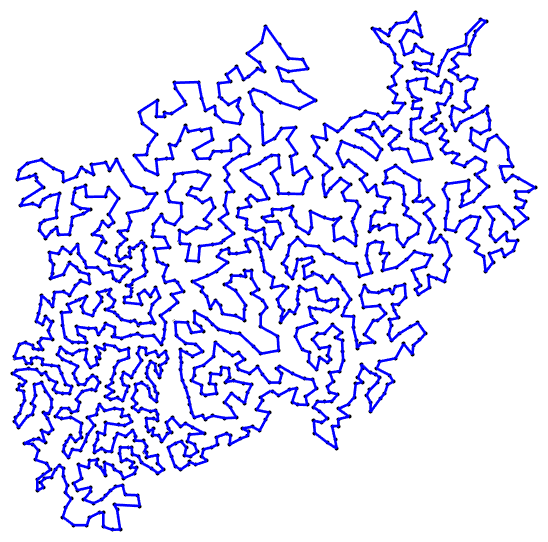}
    \end{minipage}
    \begin{minipage}[t]{0.45\textwidth}
        \centering
        \subfiguretitle{b)}
        \includegraphics[width=\textwidth]{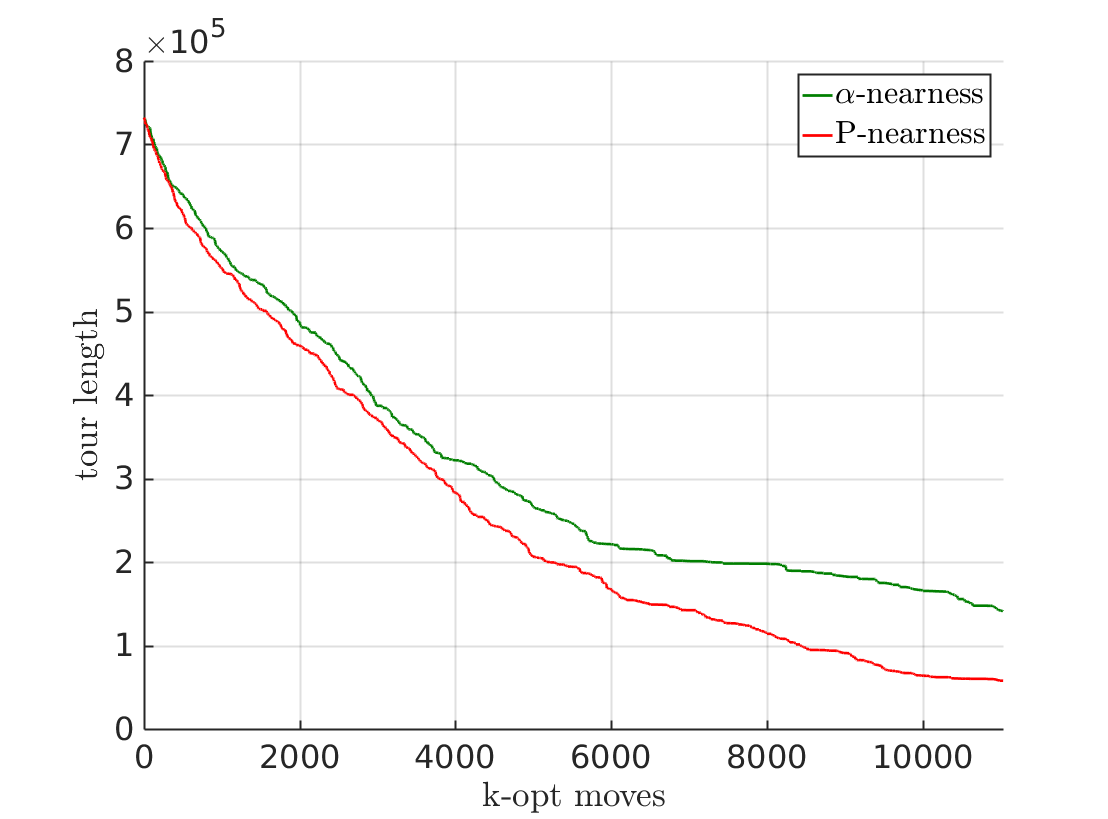}
    \end{minipage} \\[1em]
    \begin{minipage}[t]{0.45\textwidth}
        \centering
        \subfiguretitle{c)}
        \includegraphics[width=0.9\textwidth]{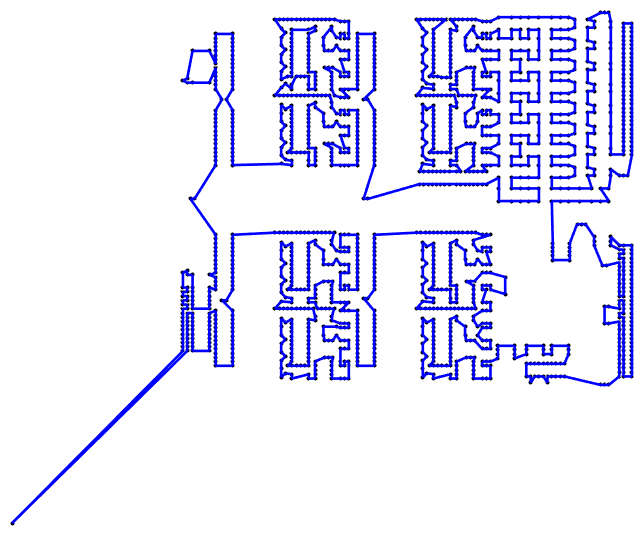}
    \end{minipage}
    \begin{minipage}[t]{0.45\textwidth}
        \centering
        \subfiguretitle{d)}
        \includegraphics[width=\textwidth]{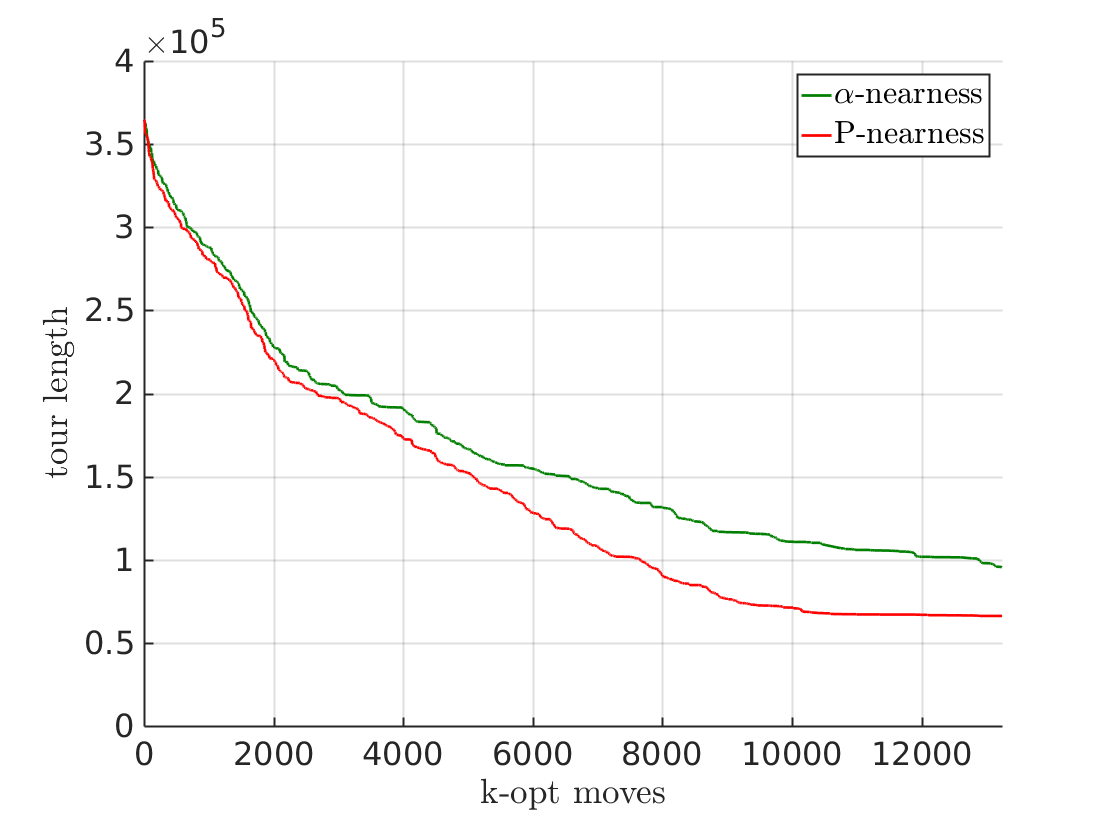}
    \end{minipage}
    \caption{TSPLIB instances and tour lengths as a function of the number of $k$-opt moves. a--b) nrw1379. c--d) d1655.}
    \label{fig:k-opt vs. tour length}
\end{figure}

\section{Conclusion and future work}
\label{sec:Conclusion}

In this work, we explored the use of continuous relaxations and dynamical systems theory for constructing algorithms for the TSP. Our approach aimed to exploit the observation that the solution of the TSP can be represented as a permutation matrix which lies on the manifold of orthogonal matrices. In the first part of this manuscript, we constructed a dynamical system on the manifold of orthogonal matrices that converges to solutions of the TSP. We also explored the construction of gradient flows for tour matrices in Section~\ref{sec:tour_mat}. We found that although the dynamical systems approach is elegant and sheds light on the structure and complexity of NP-hard problems, it often converges to local optima. We also used homotopy continuation methods to compute subsets of the stable manifolds of the optimal solutions.

Inspired by the dynamical systems approach, we then exploited a Procrustes based approximation that computes an orthogonal matrix that minimizes the TSP cost. Our approach was based on the computation of the solution of the two-sided orthogonal Procrustes problem which is based on the eigendecomposition of the corresponding tour and distance matrices of the TSP instance. We then constructed a homotopy of the Procrustes solution with the distance matrix that is then used to bias the popular Lin--Kernighan heuristic.

In certain TSP instances, the candidate sets constructed from the homotopy are found to give faster convergence than minimum spanning tree ($1$-tree) based approaches. Our algorithm was implemented in the LKH software framework and demonstrated on multiple TSPLIB and random TSP instances.

Future work includes the testing of the Procrustes approach on larger instances of the TSP by exploiting parallel eigenvector computation packages~\cite{cit:scalapack}. On the theoretical side, we aim to pursue the generalization of our proposed approach to the quadratic assignment problem (QAP). The aim is to develop efficient heuristics utilizing the results of the Procrustes problem and dynamical systems theory for solving strongly NP-hard problems. We are also exploring the use of subgradient optimization for improving the Procrustes solution which is expected to provide faster convergence rates for the TSP and related optimization problems. Additionally, we are exploring the use of fast spectral methods~\cite{cit:schafer2017owhadi} for accelerating the candidate set computations. Moreover, we hope that this work increases interest in the area at the intersection of dynamical systems theory and combinatorial optimization. There appear to be deep connections between the two areas~\cite{cit:wibisono} that may enable the construction of new optimization algorithms for a wide class of optimization problems.

\section*{Acknowledgments} \noindent The authors thank Prof.~Keld Helsgaun for discussions related to the Lin--Kernighan heuristic and his software and also Dr. Mirko Hessel-von Molo and Steffen Ridderbusch for discussions related to the approach. This material is based upon work supported by the Defense Advanced Research Projects Agency (DARPA) and Space and Naval Warfare Systems Center, Pacific (SSC Pacific) under Contract No. N6600118C4031.

\bibliographystyle{alpha}
\bibliography{TSP}
\end{document}